\DeclareMathAlphabet{\mathbfsl}{OT1}{ppl}{b}{it} 
\newcommand{\R}{\mathbb{R}}
\newcommand{\E}{\mathbb{E}}
\newcommand{\cO}{{\cal O}}
\newcommand{\cS}{{\cal S}}
\newcommand{\bfp}{{\boldsymbol p}}
\newcommand{\bfs}{{\boldsymbol s}}
\newcommand{\bfx}{{\boldsymbol x}}
\theoremstyle{definition}
\newtheorem{theorem}{Theorem}
\newtheorem{lemma}{Lemma}
\newtheorem{remark}{Remark}
\newtheorem{corollary}{Corollary}
\newtheorem{definition}{Definition}
\newtheorem{example}{Example}
\newtheorem{problem}{Problem}
\newtheorem{claim}{Claim}
\newcommand{\abs}[1]{|#1|}
\newcommand{\set}[2]{\left\{#1\;\left|\; #2\right.\right\}}
\title{\textbf{Analyzing Collection Strategies: A Computational Perspective on the Coupon Collector Problem
}\vspace{0ex}}
\author{%
  \IEEEauthorblockN{Hadas~Abraham}
  \IEEEauthorblockA{The Henry \& Marilyn Taub faculty of Computer Science\\
                    Technion\\
                    Haifa, Israel\\
                    Email: hadasabraham@campus.technion.ac.il}
\and
 \IEEEauthorblockN{Ido~Feldman}
  \IEEEauthorblockA{The Henry \& Marilyn Taub faculty of Computer Science\\
                    Technion\\
                    Haifa, Israel\\
                    Email: ido19311@gmail.com}
  \and
 \IEEEauthorblockN{Eitan~Yaakobi}
  \IEEEauthorblockA{The Henry \& Marilyn Taub faculty of Computer Science\\
                    Technion\\
                    Haifa, Israel\\
                    Email: yaakobi@cs.technion.ac.il}

}
\author{\IEEEauthorblockN{\textbf{Hadas Abraham}\IEEEauthorrefmark{2}, \textbf{Ido Feldman}\IEEEauthorrefmark{2}, and \textbf{Eitan Yaakobi}\IEEEauthorrefmark{2}}
\vspace{-.29ex}
\IEEEauthorblockA{\IEEEauthorrefmark{2}{The Henry and Marilyn Faculty of Computer Science, Technion -- Israel Institute of Technology, Haifa, Israel.}}
\vspace{-.29ex} \IEEEauthorblockA { Emails: hadasabraham@campus.technion.ac.il, ido19311@gmail.com, yaakobi@cs.technion.ac.il}
\vspace{-3.29ex}
}
\begin{document} 

\maketitle

\begin{abstract}
The Coupon Collector’s Problem (CCP) is a well-known combinatorial problem that seeks to estimate the number of random draws required to complete a collection of $n$ distinct coupon types. Various generalizations of this problem have been applied in numerous engineering domains. However, practical applications are often hindered by the computational challenges associated with deriving numerical results for moments and distributions. In this work, we present three algorithms for solving the most general form of the CCP, where coupons are collected under any arbitrary drawing probability, with the objective of obtaining $t$ copies of a subset of $k$ coupons from a total of $n$. The First algorithm provides the base model to compute the expectation, variance, and the second moment of the collection process. The second algorithm utilizes the construction of the base model and computes the same values in polynomial time with respect to $n$ under the uniform drawing distribution, and the third algorithm extends to any general drawing distribution. All algorithms leverage Markov models specifically designed to address computational challenges, ensuring exact computation of the expectation and variance of the collection process. Their implementation uses a dynamic programming approach that follows from the Markov models framework, and their time complexity is analyzed accordingly.
\end{abstract}
\section{Introduction}\label{sec:intro}
The classic \emph{Coupon Collector's Problem (CCP)}~\cite{feller1991introduction,flajolet1992birthday} assumes that there are $n$ different types of coupons and the question of interest is \emph{how many coupons one should collect before possessing one coupon of each type}. It is well known that if the coupons are drawn uniformly at random (with repetition), then the expected number of coupons necessary 
to have at least one coupon from each type is roughly $n\log n$. The CCP has far-reaching applications in computational complexity, learning models, probabilistic algorithms, and various other biological sciences and engineering applications~\cite{boneh1996general}. Furthermore, it plays a critical role in applications ranging from system failure analysis to optimal sampling methods in various scientific fields, such as computer science and optimization. The CCP has been extensively studied across these domains due to its broad utility in solving complex engineering and probabilistic problems \cite{boneh1997coupon}.


Several other variants of the CCP have been explored in the literature which considers, for example, the number of collected coupons ($k$), the number of copies for each collected coupon ($t$), and the probability distribution to draw each coupon \cite{anceaume2015new,berenbrink2009weighted,doumas2012coupon,flajolet1992birthday,holst1986birthday,neal2008generalised,von1954coupon}.
This paper is focused on the most \textbf{general version} of the CCP which studies the required number of collected coupons one should draw before collecting $t$ copies of $k$ distinct coupons with any drawing probability. In fact, finding the expected number of drawn coupons for this general setup was solved in~\cite{boneh1997coupon,erdHos1961classical, flajolet1992birthday,klamkin1967extensions,newman1960double} (see this formula in Equation (\ref{eq:gen_t_k})).
However, computationally calculating this formula is not feasible even for a moderate value of $n,k,t$  due to the computational overflows, rounding errors, and time complexity. For example, in~\cite{boneh1989coupon}, it was claimed that \say{It appears that only rarely and then, for rather small $n$ any expectations or probabilities have been explicitly calculated}, and it was also reported in~\cite{boneh1989coupon} that \say{The situation seemed so bad that a researcher in an area that applies the CCP complained that once $n$ exceeds $30$ or so, it is immaterial whether one knows the probabilities or not - since nothing can be computed with them anyway.}

The main contribution of this paper is overcoming these major drawbacks by proposing efficient methods to exactly calculate the expectation and variance of the collection process. In particular, when the probability distribution is uniform, then calculating the exact expectation according to Equation (\ref{eq:uni_t_k}) is exponential with $n$ for any $t,k>1$, while the complexity of our proposed algorithm is $O(n^t)$.

Due to the numerical challenges of exact computation, asymptotic estimations are of great importance. Some asymptotic results for the case of the classic CCP with uniform drawing probability have been obtained by several studies~\cite{baum1965asymptotic,holst1986birthday,janson1983limit, newman1960double}. However, for any drawing probability, asymptotic estimations have been obtained only for a few specific cases when $\bfp$ is characterized by several parameters. Thus, in the general settings of the CCP, to the best of our knowledge, there is no efficient way to explicitly calculate or rather estimate the expectation of the collection process.

The paper is organized as follows:  \autoref{sec:defprobrelated} provides definitions, problem statement, and discusses some relevant prior results on the CCP.
\autoref{sec:nonunidist} presents the base algorithm for the general CCP with time complexity analysis. 
Subsequently, \autoref{sec:unidist} proposes an algorithm for the general CCP under a uniform distribution with time complexity of $O(n^t)$. Next \autoref{sec:groupsym} extends this algorithm to any probability distribution with time complexity analysis.
\autoref{sec:compa}evaluates the algorithms and compares them. Finally, \autoref{sec:conc} summarizes the findings and concludes the paper.

\section{Definitions, Problem Statement, Related Work}\label{sec:defprobrelated}
\subsection{Definitions and Notations}\label{sec:def}
Let $n$ be the number of different coupons, $k$ the desired number of distinct coupons to collect, and $t$ the number of copies needed to retrieve a coupon. 
The probability vector $\bfp \in \R^{n}$ represents the drawing probability with $p_i$ being the probability that coupon $i$ is issued. To analyze the collection process, it will be represented as a discrete-time Markov chain. Each state in this process will be a length-$n$ vector where the $i$-th value represents the number of collected copies of the $i$-th coupon. 
 
Since we require only $t$ copies of each collected coupon, there is no need to track larger values than $t$. This also implies that the Markov chain is finite. This Markov model is formally defined as follows.
\begin{definition}\label{def:bam}
    The Markov model of the coupon collection process consists of the following: 
    \begin{enumerate}
        \item Let $X_b$ be a random variable that represents the state of the collection process \textbf{after} drawing $b$ "tracked" coupons.
        \item Let $\cS_{(n,t)}$ be the set of states such that $\cS_{(n,t)}=\set{\bfs=(s_1,\ldots,s_{n})}{\forall i\in [n]: s_i\leq t}$, where for each $1 \leq i \leq n$, $s_i $ is the number of copies drawn from the $i$-th coupon. For $\bfs\in \cS_{(n,t)}$, let $\widehat{\bfs} \triangleq \sum_{i=1}^{n} s_i$, be the sum of tracked collected coupons at state $\bfs$.
        \item Define \textbf{complete} coupons as those with at least $t$ copies and \textbf{incomplete} coupons as those with less than $t$ copies.
        Let $F(\bfs) \triangleq \{ i \hspace{1mm} | \hspace{1mm} s_i = t \}$ be the set of complete coupons in state $\bfs$.
        Denote level $L(h)$ to be the set of all states with $h$ complete coupons, i.e., $L(h) \triangleq \set{\bfs}{\abs{F(\bfs)}=h}$.
        \item The\textbf{ terminal states} of this Markov chain are all states in $L(k)$ since the process ends at the first time reaching a state with $k$ coupons with at least $t$ copies.
        \item The state $\bfs\in \cS_{(n,t)}$ is called \textit{reachable} from the state $\bfs'$ if $\forall i \in [n]$ it holds that $s_i - s'_i \geq 0$. Furthermore, $A(\bfs)$ is the set of the direct ancestors of the state $\bfs$, that is, $A(\bfs) = \{\bfs' | \bfs  \text{ is reachable from } \bfs' \text{ and } \widehat{\bfs} - \widehat{\bfs'} = 1\}$.
        \item For all $\bfs\in \cS_{(n,t)}$, let $D_{\bfs}$ be the random variable representing the number of draws made from the start of the collection process until reaching state $\bfs$.
        \item Let $M$ denote the transition matrix of this Markov chain where $M_{\bfs', \bfs} = \text{Pr}(X_{b}=\bfs | X_{b-1}=\bfs' )$ for two states $\bfs'$ and $\bfs$.
        Define $M_{\bfs',\bfs}^{(n)} = \text{Pr}(X_{n}=\bfs | X_{0}=\bfs')$, which is the probability of collecting the additional "tracked" coupons from $\bfs'$ to the composition of collected coupons in $\bfs$ (i.e., $\widehat{\bfs}-\widehat{\bfs'}=n$ strands). The $n$-step transition probability matrix is defined as $M^{(n)}\triangleq(M_{\bfs',\bfs}^{(n)})$.
        The transition matrices $M$ and $M^{(n)}$ are analyzed and defined as the \textbf{conditional probabilities} of transitioning to a new state given that an incomplete coupon is drawn.
    \end{enumerate}
\end{definition}
The coupon collection process satisfies the Markovian property, i.e., for a collection of $b$ states say ${\bfs}_0, {\bfs}_1, \ldots, {\bfs}_{b-1}$: $\text{Pr}\left(X_{b}={\bfs}_{b} | X_0={\bfs}_0, \ldots, X_{b-1}={\bfs}_{b-1}\right) = \text{Pr}\left(X_{b}={\bfs}_{b} | X_{b-1}={\bfs}_{b-1}\right)$.
For shorthand, we refer to the initial state as ${\bfs}_0$ (i.e., zero copies for each coupon) and $\text{Pr}(X_0={\bfs}_0) = 1$. At any state $\bfs$, there are two potential scenarios: either remaining in the current state by drawing a complete coupon that already has $t$ copies, or collecting an incomplete coupon and progressing to a new state.

The next claim shows basic properties which follow directly from Definition~\ref{def:bam}.
\begin{claim}    
   Given $\bfs', \bfs \in \cS_{(n,t)}$, it holds that
\begin{enumerate}
    \item  If $\bfs' \notin A(\bfs)$ then $M_{\bfs', \bfs} = 0 $.
    \vspace{1mm}
    \item $M_{\bfs'', \bfs}^{(n)}=\sum_{\bfs' \in A(\bfs)} M_{\bfs'',\bfs'}^{(n-1)} M_{\bfs', \bfs}$ and if $\widehat{\bfs}-\widehat{\bfs''} \neq n$ then $M_{\bfs'', \bfs}^{(n)}=0$.
    \vspace{1.5mm}
    \item For $\bfs',\bfs \in L(k)$ it holds that $M_{\bfs',\bfs}=M_{\bfs,\bfs'}=0$.
    \item For all $\bfs' \in A(\bfs)$ it holds that $I(\bfs',\bfs) = i$ if and only if $s_i - s'_{i} = 1$, which indicates the coupon collected when transitioning from state $\bfs'$ to state $\bfs$.
\end{enumerate} 
\end{claim}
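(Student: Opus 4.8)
The plan is to derive all four items directly from Definition~\ref{def:bam}; none of them needs more than elementary bookkeeping, the one structural fact being that a single ``tracked'' draw (drawing an incomplete coupon) increments exactly one coordinate of the current state by one and leaves every other coordinate unchanged. Item~1 is then the contrapositive of this fact: if $M_{\bfs',\bfs}>0$, then $\bfs$ arises from $\bfs'$ by adding $1$ to some coordinate $i$ with $s'_i<t$, so $s_j=s'_j$ for $j\neq i$ and $\widehat{\bfs}-\widehat{\bfs'}=1$, which says precisely $\bfs'\in A(\bfs)$; hence $\bfs'\notin A(\bfs)$ forces $M_{\bfs',\bfs}=0$. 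Item~4 is the same observation read the other way: if $\bfs'\in A(\bfs)$ then $\bfs$ is reachable from $\bfs'$, so $s_i-s'_i\ge 0$ for all $i$ while $\widehat{\bfs}-\widehat{\bfs'}=\sum_i(s_i-s'_i)=1$; a sum of nonnegative integers equal to $1$ has a single nonzero term, equal to $1$, and the index of that term is by definition the coupon collected in the transition, giving $I(\bfs',\bfs)=i\iff s_i-s'_i=1$.

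For item~2 I would invoke the Chapman--Kolmogorov identity for the conditional chain, splitting the $n$ tracked draws leading from $\bfs''$ to $\bfs$ at the state $\bfs'$ occupied after the first $n-1$ of them: $M^{(n)}_{\bfs'',\bfs}=\sum_{\bfs'}M^{(n-1)}_{\bfs'',\bfs'}M_{\bfs',\bfs}$. By item~1 each term with $\bfs'\notin A(\bfs)$ vanishes, so the sum collapses to a sum over $A(\bfs)$, which is the stated recursion. For the degenerate case, each tracked draw raises $\widehat{\,\cdot\,}$ by exactly $1$, so after $n$ of them it has risen by exactly $n$; therefore $\widehat{\bfs}-\widehat{\bfs''}\neq n$ makes the transition impossible and $M^{(n)}_{\bfs'',\bfs}=0$. (Equivalently one can argue by induction on $n$, extracting $\widehat{\bfs}-\widehat{\bfs''}=n$ from the recursion together with $\widehat{\bfs}-\widehat{\bfs'}=1$ for $\bfs'\in A(\bfs)$.)

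For item~3 I would use that every state of $L(k)$ is a terminal state of the chain (Definition~\ref{def:bam}, part~4): the collection process halts the first time it holds $k$ complete coupons, so no draw is ever made from such a state. Hence $M_{\bfs,\bfs'}=0$ for every $\bfs\in L(k)$ and every $\bfs'$; specializing $\bfs'$ to lie in $L(k)$ and then exchanging the roles of $\bfs$ and $\bfs'$ yields both asserted equalities.

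I do not expect a genuine obstacle here—the claim is purely a consequence of how the Markov model is set up. The only two places that warrant a word of care are: in item~3, that the argument really rests on the terminal-state convention and not on some structural impossibility of an edge inside $L(k)$ (incrementing a coordinate that stays strictly below $t$ is perfectly consistent with both endpoints remaining in $L(k)$, so the edge is killed only because the source state is absorbing); and in item~2, that $M$ and $M^{(n)}$ are the \emph{conditional} kernels of Definition~\ref{def:bam} (conditioned on drawing an incomplete coupon), so that they are honest (sub)stochastic transition matrices and Chapman--Kolmogorov applies directly.
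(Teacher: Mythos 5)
Your proof is correct and follows exactly the route the paper intends: the paper states this claim without proof, remarking only that it "follows directly from Definition~\ref{def:bam}," and your argument is precisely the elaboration of that — the single-increment structure of a tracked draw for items 1, 2, and 4, Chapman--Kolmogorov for the recursion, and the terminal-state (absorbing) convention for item 3. Your parenthetical caveat on item 3 — that an edge between two states of $L(k)$ is ruled out by the stopping convention rather than by any structural impossibility — is a worthwhile observation that the paper leaves implicit.
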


The next definition introduces auxiliary variables that will be used for the analysis of the collection process.
\begin{definition}\label{def:pfs}
    Given the drawing probability vector $\bfp$, for each state $\bfs\in \cS_{(n,t)}$, we define
    \begin{enumerate}
    \item $p^{F}(\bfs) := \sum_{i \in F(\bfs)} p_i$: The probability of drawing a complete coupon.
    \vspace{1mm}
    \item $d(\bfs)$ is the random variable that governs the number of draws required to progress from the state $\bfs$ into a new state. Note that $d(\bfs)$ follows a geometric distribution, that is, $d(\bfs) \sim \text{Geom}(1 - p^{F}(\bfs))$.
\end{enumerate}
\end{definition}
\begin{example}
    Given $n=3,k=2,t=2,\bfp=(0.2,0.3,0.5)$, it 
    \begin{enumerate}
        \item $\cS_{(3,2)}=\set{\bfs=(s_1,s_2,s_3)}{\forall i\in[3]: s_i\leq 2}$
        \item For the state $\bfs=(1,2,0)$, coupon number $2$ is a complete coupon and $1,3$ are incomplete coupons. Moreover $F(\bfs)=\set{i}{s_i=2}=\{2\}$ and $\widehat{\bfs} = 3$, $p^F(\bfs)=0.3$, $\bfs'=(0,2,0)\in A(\bfs)$ and $d(\bfs)\sim \text{Geom}(1 - 0.3)$.
        \item The Terminal states are in $L(2)=\{(2,2,0), (2,2,1), (0,2,2), (1,2,2), (2,0,2), (2,1,2)\}$
    \end{enumerate}
\end{example}

\subsection{Problem Statement}\label{sec:probstate}

In the \emph{general version} of the coupon collection process, $n$ coupons are drawn with replacement according to a probability vector $\bfp$, and the goal is to collect $k$ distinct coupons, each at least $t$ times. Let $\nu^{\bfp}_{t}(n,k)$ denote the required number of draws; for uniform $\bfp$, we write $\nu_t(n,k)$. This paper addresses the following problem:

\begin{problem}
Given $n\geq k\geq 1$, and $\bfp$, compute the exact expectation $\E[ \nu_{t}^{\bfp} (n,k)]$ and variance $Var( \nu_{t}^{\bfp} (n,k))$.
\end{problem}

\subsection{Related Work}
The classical CCP, i.e., $t=1$, was first studied by Feller~\cite{feller1991introduction}, where it was referred to as the \emph{dixie cup problem}. For $n$ equally likely coupons, the expected number of draws to collect all coupons is
$\E[ \nu_1 (n,k=n)] = nH_n =  n \log n + \gamma n + \cO(1)$,  where $H_n$ is the $n$-th harmonic number and $\gamma \sim 0.577$ is the Euler–Mascheroni constant. More generally, according to~\cite{flajolet1992birthday},$\E [\nu_1(n,k)]  = n(H_n-H_{n-k}) $ which, when $n \to \infty$ (in this case, there exists $0<a<1$, such that for $n$ large enough $k < an$), admits the approximation $\E [\nu_1(n,k)] \approx n \log(\frac{n}{n-k}).$

For the general CCP ($t>1$), the problem is closely related to the classical urn model~\cite{erdHos1961classical,newman1960double}. 
With $n$ urns and drawing probabilities $\mathbf{p}=(p_1,\dots,p_n)$, Newman~\cite{newman1960double} showed that when $n\rightarrow \infty$ and under the uniform distribution,
$\E[\nu_t(n,n)] = n\log n + n(t-1)\log\log n + nC_t + o(n),$
where $C_t$ is a constant depending on $t$. 
Erd\H{o}s and R\'enyi~\cite{erdHos1961classical} further proved that $\nu_t(n,n)$ is tightly concentrated around its mean: after 
$n\log n + n(t-1)\log\log n + nx$ draws, the probability that all coupons appear at least $t$ times converges to $e^{-\frac{e^{-x}}{(t-1)!}}$ for $n$ large enough.
Extensions to nonuniform distributions under additional constraints were studied in~\cite{brayton1963asymptotic}, and Flajolet et al.~\cite{flajolet1992birthday} established results for arbitrary discrete distributions, and showed that the expected number of draws required to obtain at least $t$ copies of $k$ out of $n$ coupons is

\begin{equation}
    \label{eq:gen_t_k}
\E[ \nu_{t}^{\bfp} (n,k)]  \hspace{-2.5px}=  \hspace{-2.8px}\sum_{q=0}^{k-1}  \hspace{-1.5px}\int_0^\infty  \hspace{-6.5px}[u^q]  \hspace{-3px}\prod_{i=1}^n\hspace{-0.5px} \Bigl( e_{t-1} (p_i v) +u\cdot\bigl(e^{p_i v} \hspace{-2px}-e_{t-1} (p_i v)  \bigr) \hspace{-0.5px} \Bigr) e^{-v} dv
\end{equation}

\noindent where $e_t(x)=\sum_{i=0}^t \frac{x^i}{i!}$
and for a polynomial $Q(u)$, $[u^q]Q(u)$ is the coefficient of $u^q$ in $Q(u)$. In case $\bfp$ is the uniform distribution, using the result of~\cite{flajolet1992birthday} it follows that

\begin{equation}
    \label{eq:uni_t_k}
\E[ \nu_{t}(n,k)] \hspace{-2.8px}=\hspace{-3px} \sum_{q=0}^{k-1}\hspace{-1.5px}\binom{n}{q}\hspace{-3px} \left[ \int_0^\infty\hspace{-5px} \left( e_{t-1} (\frac{v}{n})\right)^{n-q}\hspace{-3.5px}\left(e^{\frac{v}{n}} \hspace{-2px}-\hspace{-1px}e_{t-1} (\frac{v}{n}) \right)^{q}\hspace{-2.5px}e^{-v} dv\right]\hspace{-1.5px}.
\end{equation}

\noindent As can be seen, for practical purposes, the expressions in (\ref{eq:gen_t_k}) and (\ref{eq:uni_t_k}) and their asymptotic behavior are not easy to calculate or to work with. Moreover, to the best of our knowledge, explicitly calculating the cumulative probability distribution ${P [ \nu_{t}^{\bfp}(n,k) > m]}$ is still open. Furthermore, it was stated in~\cite{boneh1997coupon, boneh1989coupon} that \say{The computation of the integral in equations (\ref{eq:gen_t_k}), (\ref{eq:uni_t_k}) over the sums is enormous and dealing with $2^{n}$ oscillating terms \ldots unless $k$ is a very small number - or similarly close to $n$, there is no essentially simpler way to express this truly complex combinatorial quantity}.

Prior work has established several limits and approximations for the CCP under uniform and nonuniform distributions. Holst~\cite{holst1986birthday} derived the distribution of $\nu_t(n,k)$ and asymptotic bounds on its expectation and cumulative probabilities for the case $k=n$ as $n \to \infty$. Additional limit results for CCP-related random variables under constraints on the total number of draws were presented in~\cite{flatto1982limit}. Asymptotic expressions for the moments, expectation, and variance of $\nu_t^{\bfp}(n,k=n)$ were further developed in~\cite{doumas2013asymptotics, doumas2016coupon, doumas2012coupon}. For nonuniform distributions, Boneh and Hofri~\cite{boneh1997coupon} showed that accurate approximations can be computed in time linear in $t$, while exact computations require exponential time.

Several works analyze CCP variants using Markov chains~\cite{abraham2024covering, anceaume2015new, 10543138}, typically under asymptotic assumptions such as $n \to \infty$ and/or $k = n$. Probabilities are computed via repeated multiplication of the transition matrix, but the exponential growth of the state space makes this approach computationally expensive and prone to numerical instability.

\section{The General CCP Algorithm}\label{sec:nonunidist}
In this section, we present the \emph{Base Algorithm} (\textbf{BA}) for directly computing the expectation and variance of the general CCP. While conceptually simple, we include it for completeness, as, to the best of our knowledge, it does not appear elsewhere in the literature. Monte Carlo simulation is an alternative, but it is neither exact nor scalable.

The algorithm follows the definitions of the base Markov model presented in Section~\ref{sec:def}. The calculation of these values is done for each state $\bfs\in \cS_{(n,t)} $ in the Markov chain, specifically, it computes:
\begin{enumerate}
    \item The transition probabilities $M_{\bfs_0,\bfs}^{(\widehat{\bfs})} = \Pr(X_{\widehat{\bfs}}=\bfs \mid X_0=\bfs_0)$.
    \item The expectation and second moment of the number of draws required to reach $\bfs$, denoted by $\E[D_{\bfs} \mid X_{\widehat{\bfs}}=\bfs]$ and $\E[D_{\bfs}^2 \mid X_{\widehat{\bfs}}=\bfs]$.
\end{enumerate}
The next claim defines the values of the transition matrices.
\begin{claim}
    Given $\bfs\in \cS_{(n,t)}, \bfs'\in A(\bfs)$, it holds that
    \begin{enumerate}
        \item $M_{\bfs',\bfs } = \frac{p_{I(\bfs',\bfs)}}{1 - p^{F}(\bfs')}$.
        \item $M_{\bfs_0,\bfs}^{(\widehat{\bfs})} = \sum_{\bfs' \in A(\bfs)} M_{\bfs_0,\bfs'}^{(\widehat{\bfs'})} \cdot M_{\bfs',\bfs}$.
    \end{enumerate}
\end{claim}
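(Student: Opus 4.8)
The plan is to prove the two parts separately; each follows by unwinding the definition of the (conditional) transition matrix together with the structural facts already recorded in \autoref{def:bam} and the preceding claim.

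For part~1, recall that, by the last sentence of \autoref{def:bam}, $M$ is the matrix of one-step transition probabilities \emph{conditioned on the event that an incomplete coupon is drawn}. Fix $\bfs'\in A(\bfs)$ and set $i = I(\bfs',\bfs)$. By item~4 of the preceding claim, $i$ is the unique index with $s_i - s'_i = 1$ and all other coordinates of $\bfs$ and $\bfs'$ agree; hence $s'_i = s_i - 1 \le t-1 < t$, so coupon $i$ is incomplete in $\bfs'$, $p_i \le 1 - p^{F}(\bfs')$, and the denominator $1 - p^{F}(\bfs')$ is strictly positive (a state admitting a direct descendant has at least one incomplete coupon, and we may assume $p_j>0$ for all $j$ since zero-probability coupons never influence the process). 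The event ``the next draw moves the chain from $\bfs'$ to $\bfs$'' is precisely the event ``the next drawn coupon is $i$'', which has unconditional probability $p_i$, while the conditioning event has probability $1 - p^{F}(\bfs')$; the definition of conditional probability then yields $M_{\bfs',\bfs} = p_{I(\bfs',\bfs)}/\bigl(1 - p^{F}(\bfs')\bigr)$.

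For part~2, I would specialize item~2 of the preceding claim, $M_{\bfs'',\bfs}^{(n)} = \sum_{\bfs'\in A(\bfs)} M_{\bfs'',\bfs'}^{(n-1)} M_{\bfs',\bfs}$, to $\bfs'' = \bfs_0$ and $n = \widehat{\bfs}$. The only bookkeeping point is that every $\bfs'\in A(\bfs)$ satisfies $\widehat{\bfs'} = \widehat{\bfs}-1$, so $M_{\bfs_0,\bfs'}^{(\widehat{\bfs}-1)} = M_{\bfs_0,\bfs'}^{(\widehat{\bfs'})}$, and the claimed identity drops out. If a self-contained derivation is preferred, condition on the state $X_{\widehat{\bfs}-1}$ reached one tracked step before $\bfs$: the law of total probability and the Markov property give $M_{\bfs_0,\bfs}^{(\widehat{\bfs})} = \sum_{\bfs'} M_{\bfs_0,\bfs'}^{(\widehat{\bfs}-1)} M_{\bfs',\bfs}$, and by item~1 of the preceding claim $M_{\bfs',\bfs}=0$ whenever $\bfs'\notin A(\bfs)$, which restricts the sum to $A(\bfs)$ and matches the exponents as above.

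I do not expect a genuine obstacle: both parts amount to unwinding definitions. The only step needing a line of care is well-definedness of the ratio in part~1 — that the conditioning event has positive probability — which is handled by noting that any state with a direct descendant has an incomplete coupon and by the harmless standing assumption that all $p_i$ are positive; everything else (uniqueness of the differing coordinate, reduction of the $\widehat{\bfs}$-step probability to a sum over direct ancestors) is already available in the preceding claim and only needs to be cited.
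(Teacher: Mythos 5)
Your proposal is correct and takes essentially the approach the paper intends: the paper states this claim without proof as a direct consequence of \autoref{def:bam} (where $M$ is defined as the conditional transition probability given that an incomplete coupon is drawn) and of the recursion in the preceding claim, and your argument simply makes that unwinding explicit. The extra care about positivity of $1-p^{F}(\bfs')$ is a reasonable addition but not something the paper addresses.
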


The next lemmas evaluate the values of the expectation and the second moment of the number of draws to reach a state $\bfs$, respectively.
\begin{lemma}\label{cl:BAd}
    Given a state $\bfs\in\cS_{(n,t)}$, it holds that
    \begin{align*}
        \E[D_{\bfs}|X_{\widehat{\bfs}}=\bfs] = \sum_{\bfs' \in A(\bfs)} \frac{M_{\bfs_0,\bfs'}^{(\widehat{\bfs'})} \cdot M_{\bfs',\bfs}}{M_{\bfs_0,\bfs}^{(\widehat{\bfs})}} \cdot\Biggl(\E[D_{\bfs'}|X_{\widehat{\bfs'}}=\bfs'] + \frac{1}{1 - p^{F}(\bfs')}\Biggr).
    \end{align*}

\end{lemma}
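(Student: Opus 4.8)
The plan is to condition on the last coupon drawn before reaching $\bfs$, i.e., on which direct ancestor $\bfs' \in A(\bfs)$ the process occupied immediately before transitioning to $\bfs$. First I would set up the law of total expectation: writing $D_{\bfs}$ as the number of draws to reach $\bfs$, I decompose it as $D_{\bfs} = D_{\bfs'} + d(\bfs')$ on the event that $\bfs'$ was the last state before $\bfs$, where $d(\bfs')$ is the geometric waiting time in state $\bfs'$ before a tracked (incomplete) coupon is drawn, as in Definition~\ref{def:pfs}. Crucially, the transition out of $\bfs'$ that lands specifically in $\bfs$ (rather than some other successor) must be handled correctly: the event ``$X_{\widehat{\bfs}} = \bfs$'' restricts the final transition, so I need to argue that conditioned on leaving $\bfs'$ via an incomplete-coupon draw, the number of wasted draws is still geometric with parameter $1 - p^F(\bfs')$ and is independent of which incomplete coupon is ultimately drawn. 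This is the memorylessness of the geometric distribution combined with the fact that the "which incomplete coupon" choice is made only on the successful draw.

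Next I would compute the mixing weights. By Bayes' rule, conditioned on $X_{\widehat{\bfs}} = \bfs$, the probability that the previous state was $\bfs'$ equals
\[
\Pr(X_{\widehat{\bfs'}} = \bfs' \mid X_{\widehat{\bfs}} = \bfs) = \frac{M_{\bfs_0,\bfs'}^{(\widehat{\bfs'})} \cdot M_{\bfs',\bfs}}{M_{\bfs_0,\bfs}^{(\widehat{\bfs})}},
\]
using the second part of the preceding claim, namely $M_{\bfs_0,\bfs}^{(\widehat{\bfs})} = \sum_{\bfs' \in A(\bfs)} M_{\bfs_0,\bfs'}^{(\widehat{\bfs'})} M_{\bfs',\bfs}$, to normalize. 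Then by the law of total expectation over the (disjoint, exhaustive) choices of $\bfs' \in A(\bfs)$,
\[
\E[D_{\bfs} \mid X_{\widehat{\bfs}} = \bfs] = \sum_{\bfs' \in A(\bfs)} \frac{M_{\bfs_0,\bfs'}^{(\widehat{\bfs'})} M_{\bfs',\bfs}}{M_{\bfs_0,\bfs}^{(\widehat{\bfs})}} \, \E[D_{\bfs} \mid X_{\widehat{\bfs'}} = \bfs',\, X_{\widehat{\bfs}} = \bfs].
\]
The inner conditional expectation splits as $\E[D_{\bfs'} \mid X_{\widehat{\bfs'}} = \bfs'] + \E[d(\bfs') \mid \text{leave } \bfs' \text{ toward } \bfs]$; the first term uses that $D_{\bfs'}$ depends only on the path up to $\bfs'$ and is therefore independent of the later transition, and the second term is $\frac{1}{1 - p^F(\bfs')}$ since $d(\bfs') \sim \mathrm{Geom}(1 - p^F(\bfs'))$ has mean $\frac{1}{1-p^F(\bfs')}$ and, by memorylessness, this mean is unaffected by conditioning on the identity of the incomplete coupon drawn at the exit step.

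The main obstacle I anticipate is the independence/conditioning bookkeeping in the last step: one must be careful that conditioning on the full event $\{X_{\widehat{\bfs'}} = \bfs',\ X_{\widehat{\bfs}} = \bfs\}$ does not distort the distribution of the dwell time at $\bfs'$ or reintroduce dependence on the history before $\bfs'$. The clean way to see this is that the sequence of draws naturally factors into (a) the draws producing the path $\bfs_0 \to \cdots \to \bfs'$, (b) a run of $d(\bfs') - 1$ complete-coupon draws at $\bfs'$, and (c) one final incomplete-coupon draw, and these three blocks are mutually independent given the state sequence — the choice in block (c) of which incomplete coupon (hence whether we land in $\bfs$) is a single independent draw whose outcome is independent of the length of block (b) and of everything in block (a). Once this factorization is stated, substituting $\E[d(\bfs')] = \frac{1}{1 - p^F(\bfs')}$ and collecting terms yields the claimed recursion.
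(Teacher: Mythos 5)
Your proposal is correct and follows essentially the same route as the paper's (much terser) proof sketch: condition on the last ancestor $\bfs'\in A(\bfs)$, weight by the posterior probability $M_{\bfs_0,\bfs'}^{(\widehat{\bfs'})} M_{\bfs',\bfs}/M_{\bfs_0,\bfs}^{(\widehat{\bfs})}$, and decompose the draw count as the draws to reach $\bfs'$ plus the geometric dwell time with mean $\frac{1}{1-p^F(\bfs')}$. The additional care you take with the Bayes normalization and the independence of the dwell-time length from the identity of the exiting coupon is exactly the bookkeeping the paper leaves implicit.
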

\begin{proof}[Proof sketch] 
    The expectation of the number of draws to reach the state $\bfs$, given that $X_{\widehat{\bfs}} = \bfs$, is calculated as the summation over all ancestor states $\bfs' \in A(\bfs)$ of their transition probability to $\bfs$, multiplied by the sum of the expectation to reach $\bfs'$ and the expectation to leave it, which is $\frac{1}{1 - p^F(\bfs')}$.
\end{proof}

\begin{lemma}\label{cl:BAd2}
    Given a state $\bfs\in\cS_{(n,t)}$ it holds that
    \begin{align*}
     \E[D_{\bfs}^2|X_{\widehat{\bfs}}=\bfs] = \hspace{-5px} \sum_{\bfs' \in A(\bfs)} \hspace{-5px}\frac{M_{\bfs_0,\bfs'}^{(\widehat{\bfs'})} \cdot M_{\bfs',\bfs}}{M_{\bfs_0,\bfs}^{(\widehat{\bfs})}}
     \cdot \Bigl(\E[D_{\bfs'}^2|X_{\widehat{\bfs'}}=\bfs']+
     \frac{2}{1 - p^{F}(\bfs')} \cdot \E[D_{\bfs'}|X_{\widehat{\bfs'}}=\bfs']+ \frac{1 + p^{F}(\bfs')}{(1 - p^{F}(\bfs'))^2}\Bigr).
    \end{align*}

\end{lemma}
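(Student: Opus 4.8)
The plan is to mirror the argument behind Lemma~\ref{cl:BAd}, but conditioning on the ancestor through which the process enters $\bfs$ and then expanding a square rather than handling a first moment. First I would observe that, on the event $X_{\widehat{\bfs}}=\bfs$, the preceding tracked state $X_{\widehat{\bfs}-1}$ is necessarily some $\bfs'$ with $\widehat{\bfs'}=\widehat{\bfs}-1$ from which $\bfs$ is reachable in one tracked step, i.e.\ $\bfs'\in A(\bfs)$, and this $\bfs'$ is unique along any trajectory. Hence the events $\{X_{\widehat{\bfs}-1}=\bfs',\,X_{\widehat{\bfs}}=\bfs\}$, over $\bfs'\in A(\bfs)$, partition $\{X_{\widehat{\bfs}}=\bfs\}$, and using the $\widehat{\bfs}$-step transition identity recorded in the Claim above, $\Pr(X_{\widehat{\bfs}-1}=\bfs'\mid X_{\widehat{\bfs}}=\bfs)= M_{\bfs_0,\bfs'}^{(\widehat{\bfs'})} M_{\bfs',\bfs}/M_{\bfs_0,\bfs}^{(\widehat{\bfs})}$. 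By the law of total expectation, the claim reduces to evaluating $\E[D_{\bfs}^2\mid X_{\widehat{\bfs}}=\bfs,\,X_{\widehat{\bfs}-1}=\bfs']$ for each ancestor $\bfs'$ and then averaging with these weights.

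Next I would write $D_{\bfs}=D_{\bfs'}+d(\bfs')$ on this event, where $D_{\bfs'}$ counts the draws to first reach $\bfs'$ and $d(\bfs')$ counts the further draws spent at $\bfs'$ before the next tracked transition. Two facts, both consequences of the draws being i.i.d., drive the computation: (i) conditioned on $X_{\widehat{\bfs'}}=\bfs'$, the variable $d(\bfs')$ is independent of $D_{\bfs'}$, since $d(\bfs')$ depends only on draws made strictly after $\bfs'$ is reached; and (ii) conditioning additionally on the identity of the next incomplete coupon drawn (here $I(\bfs',\bfs)$) leaves the law of $d(\bfs')$ unchanged, because which incomplete coupon appears is independent of how many draws were needed to see one. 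Thus $d(\bfs')\sim\text{Geom}(1-p^{F}(\bfs'))$ even after this conditioning and remains independent of $D_{\bfs'}$, so
\[
\E[D_{\bfs}^2\mid X_{\widehat{\bfs}}=\bfs,\,X_{\widehat{\bfs}-1}=\bfs']=\E[D_{\bfs'}^2\mid X_{\widehat{\bfs'}}=\bfs']+2\,\E[D_{\bfs'}\mid X_{\widehat{\bfs'}}=\bfs']\,\E[d(\bfs')]+\E[d(\bfs')^2].
\]

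Then I would substitute the geometric moments. For $d(\bfs')\sim\text{Geom}(q)$ with $q=1-p^{F}(\bfs')$, counting trials up to and including the first success, $\E[d(\bfs')]=1/q=\tfrac{1}{1-p^{F}(\bfs')}$ and $\E[d(\bfs')^2]=\mathrm{Var}(d(\bfs'))+\E[d(\bfs')]^2=\tfrac{1-q}{q^2}+\tfrac{1}{q^2}=\tfrac{2-q}{q^2}=\tfrac{1+p^{F}(\bfs')}{(1-p^{F}(\bfs'))^2}$. Plugging these into the displayed identity, and then back into the law-of-total-expectation sum with weights $M_{\bfs_0,\bfs'}^{(\widehat{\bfs'})} M_{\bfs',\bfs}/M_{\bfs_0,\bfs}^{(\widehat{\bfs})}$, yields exactly the stated formula.

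The main obstacle is the clean justification of independence fact (ii): one must show rigorously that restricting to trajectories that leave $\bfs'$ \emph{specifically toward $\bfs$} — rather than toward some other child — does not perturb the holding-time distribution at $\bfs'$. This is intuitively immediate from memorylessness and the i.i.d.\ structure of the draws, but making it precise (for instance by computing $\Pr(d(\bfs')=m,\ \text{next incomplete coupon}=I(\bfs',\bfs))$ as a product of $m-1$ "complete" draws, one draw of coupon $I(\bfs',\bfs)$, and normalizing) is the one spot that needs a careful conditioning argument; everything else is the expansion of a square and routine geometric-moment bookkeeping, already exercised in the proof of Lemma~\ref{cl:BAd}.
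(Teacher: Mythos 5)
Your proposal is correct and follows essentially the same route as the paper's own (much terser) proof: decompose over ancestors $\bfs'\in A(\bfs)$ with the Bayes weights $M_{\bfs_0,\bfs'}^{(\widehat{\bfs'})}M_{\bfs',\bfs}/M_{\bfs_0,\bfs}^{(\widehat{\bfs})}$, write $D_{\bfs}=D_{\bfs'}+d(\bfs')$, expand the square using independence, and substitute the geometric moments $\E[d(\bfs')]=\tfrac{1}{1-p^{F}(\bfs')}$ and $\E[d(\bfs')^2]=\tfrac{1+p^{F}(\bfs')}{(1-p^{F}(\bfs'))^2}$. The conditioning subtlety you flag in point (ii) is simply asserted as independence in the paper; your sketch of how to verify it is sound and only adds rigor.
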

\begin{proof}
    The second moment of the number of draws to reach the state $\bfs$, given that $X_{\widehat{\bfs}} = \bfs$, is calculated as the summation over all ancestor states $\bfs' \in A(\bfs)$ of their transition probability to $\bfs$, multiplied by the second moment of reaching $\bfs$ through $\bfs'$. Note that the number of draws required to reach state $\bfs'$, denoted as $A$, and the number of draws to transition from $\bfs'$ to $\bfs$, denoted as $B$, are independent random variables and thus it follows that: $\E[(A+B)^2] = \E[A^2] + 2\E[A] \cdot \E[B] + \E[B^2]$. Moreover, since $d(\bfs) \sim \text{Geom}(1 - p^{F}(\bfs))$, the second moment is $\E[d(\bfs)^2] = \frac{1 + p^{F}(\bfs)}{(1 - p^{F}(\bfs))^2}$.
    Applying all together results with the lemma's statement.
\end{proof}
Recall $\nu^{\bfp}_{t}(n,k)$ from \autoref{sec:probstate}. Finally, the next theorem provides the formulas to calculate the expectation and variance of the collection process for the general CCP.

\begin{theorem}\label{th:BA}
    Given $n,k,t,\bfp$ it holds that
    \vspace{1mm}
    \begin{enumerate}
        \item $\E[\nu^{\bfp}_{t}(n,k)] = \sum_{\bfs \in L(k)} M_{\bfs_0,\bfs}^{(\widehat{\bfs})} \cdot \E[D_{\bfs}|X_{\widehat{\bfs}}=\bfs]$.
        \vspace{2mm}
        \item $\E[\nu^{\bfp}_{t}(n,k)^2] = \sum_{\bfs \in L(k)} M_{\bfs_0,\bfs}^{(\widehat{\bfs})} \cdot \E[D_{\bfs}^2|X_{\widehat{\bfs}}=\bfs]$.
        \vspace{3mm}
        \item $Var(\nu^{\bfp}_{t}(n,k)) = \E[\nu^{\bfp}_{t}(n,k)^2] - \E[\nu^{\bfp}_{t}(n,k)]^2$.
        \vspace{2mm}
    \end{enumerate}
\end{theorem}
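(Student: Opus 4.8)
The plan is to decompose the random variable $\nu^{\bfp}_t(n,k)$ according to which terminal state the collection process halts in. By Definition~\ref{def:bam}, the process stops precisely upon first entering a state in $L(k)$, so the events $\{X_{\widehat{\bfs}} = \bfs\}$ for $\bfs \in L(k)$ partition the sample space: every run reaches exactly one terminal state (each $\bfs\in L(k)$ has $\widehat{\bfs}=tk$, and the run passes through a unique sequence of tracked states up to that point). Moreover $\Pr(X_{\widehat{\bfs}} = \bfs) = M_{\bfs_0,\bfs}^{(\widehat{\bfs})}$ by the second part of the claim preceding Lemma~\ref{cl:BAd}, and on that event the number of draws $\nu^{\bfp}_t(n,k)$ equals $D_{\bfs}$, since reaching $\bfs$ already requires having drawn $k$ coupons to completion and no further tracked coupon is ever drawn after the process terminates.

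First I would establish the total-probability identity $\sum_{\bfs \in L(k)} M_{\bfs_0,\bfs}^{(\widehat{\bfs})} = 1$, i.e. that the terminal level is reached with probability one; this follows because from any non-terminal state there is positive probability of drawing an incomplete coupon (there is at least one, since fewer than $k\le n$ coupons are complete), so the finite Markov chain is absorbed in $L(k)$ almost surely. Then, applying the law of total expectation conditioned on the terminal state,
\begin{align*}
\E[\nu^{\bfp}_t(n,k)] &= \sum_{\bfs \in L(k)} \Pr(X_{\widehat{\bfs}}=\bfs)\,\E[\nu^{\bfp}_t(n,k)\mid X_{\widehat{\bfs}}=\bfs] \\
&= \sum_{\bfs \in L(k)} M_{\bfs_0,\bfs}^{(\widehat{\bfs})}\,\E[D_{\bfs}\mid X_{\widehat{\bfs}}=\bfs],
\end{align*}
which is part~1. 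Part~2 is identical with $\nu^{\bfp}_t(n,k)$ and $D_{\bfs}$ replaced by their squares, using $\E[\nu^{\bfp}_t(n,k)^2\mid X_{\widehat{\bfs}}=\bfs] = \E[D_{\bfs}^2\mid X_{\widehat{\bfs}}=\bfs]$ on the same partition. The quantities $\E[D_{\bfs}\mid X_{\widehat{\bfs}}=\bfs]$ and $\E[D_{\bfs}^2\mid X_{\widehat{\bfs}}=\bfs]$ are exactly those computed recursively in Lemmas~\ref{cl:BAd} and~\ref{cl:BAd2}, so no new computation is needed here. Part~3 is then just the definitional identity $Var(Y) = \E[Y^2] - \E[Y]^2$ applied to $Y = \nu^{\bfp}_t(n,k)$, combining parts~1 and~2.

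The only genuinely delicate point — and the step I expect to require the most care — is justifying that conditioning on the terminal state $\bfs$ is the same event as conditioning on $X_{\widehat{\bfs}}=\bfs$ for the number-of-draws variable, i.e. that $D_{\bfs}$ under that conditioning has the same law as $\nu^{\bfp}_t(n,k)$ under it. This needs the observation that once the process enters $L(k)$ it never leaves (third item of the claim preceding Definition~\ref{def:pfs}: transitions between states of $L(k)$ have zero probability), together with the fact that draws of already-complete coupons after termination are irrelevant to $\nu^{\bfp}_t(n,k)$ as defined — the count stops at the draw that completes the $k$-th coupon. Spelling out that the number of untracked draws interleaved before termination is already accounted for inside $D_{\bfs}$ (via the geometric sojourn times $d(\bfs')$ in Lemma~\ref{cl:BAd}) closes the argument. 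Everything else is a routine application of the law of total expectation over a finite partition.
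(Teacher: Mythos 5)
Your proof is correct and follows essentially the same route the paper (implicitly) takes: the theorem is just the law of total expectation/probability over the partition of the sample space by terminal state, with all the real work delegated to Lemmas~\ref{cl:BAd} and~\ref{cl:BAd2}; the paper in fact gives no separate proof of Theorem~\ref{th:BA} at all. One parenthetical in your argument is false, though harmless: not every $\bfs\in L(k)$ has $\widehat{\bfs}=tk$ --- the non-complete coupons may hold anywhere from $0$ to $t-1$ copies at termination, so $\widehat{\bfs}$ ranges from $tk$ up to $tk+(n-k)(t-1)$ (as the paper notes after Algorithm~\ref{alg:basealg}); the events $\{X_{\widehat{\bfs}}=\bfs\}$ still partition the space because the process halts at its first entry into $L(k)$, not because all terminal states sit at the same step.
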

\begin{algorithm}
\caption{\textbf{BA} Expectation and Second Moment Calculation}\label{alg:calc}
\begin{algorithmic}[1]
\State \textbf{Initialize:}
\State $current\_layer \gets [\bfs_0]$
\State $E \gets 0 \text{ \# the coupon collection process expectation}$
\State $E2 \gets 0$ \text{\# the coupon collection process second moment}
\State $TS \gets []$ \text{\# the set of terminal states}
\State $new\_layer \gets []$

\For{$iter \gets 0$ to $(k \cdot t + (n - k) \cdot (t - 1))$}
    \State $new\_layer \gets 
    \begin{aligned}[t]
        &\text{Create the reachable states}\\
        &\text{from }current\_layer
    \end{aligned}$
    \State $new\_layer \gets 
        \begin{aligned}[t]
            &\text{Update cumulative probability, } \\
            &\text{expectation, and second moment}\\
            &\text{of the new states based on}\\
            &\text{the ancestors in } current\_layer
        \end{aligned}$
    \State $TS \gets \text{Terminal states from } new\_layer$
    \State $E \gets E + \sum\limits_{s \in TS} P(s) \cdot E(s)$
\State $E2 \gets E2 + \sum\limits_{s \in TS} P(s) \cdot E2(s)$

    \State $current\_layer \gets new\_layer \setminus TS$
\EndFor

\State \textbf{Return:} $E$, $E2$
\end{algorithmic}
\label{alg:basealg}
\end{algorithm}
Algorithm~\ref{alg:basealg} is a pseudocode algorithm of the dynamic programming approach that utilizes the variables presented in Theorem~\ref{th:BA}. Note that the maximum number of coupons that can be drawn for any terminal state in level $k$ is exactly $k$ coupons with $t$ copies and the other $n-k$ with $t-1$ copies, that is, $kt+(n-k)(t-1)$.

The time complexity of the \textbf{BA} is derived by the number of edges in the Markov chain. Denote by $V_{BA}(n,k,t)$ the number of vertices corresponding to the Markov chain in the \textbf{BA}. Similarly, let $E_{BA}(n,k,t)$ be the number of edges in the \textbf{BA} Markov chain. 
The next theorem provides the sizes of the vertices and edges set for all $n,k,t$. 
\begin{theorem}\label{th:countsgen}
    Given $n,k,t$ it holds that
    \begin{enumerate}
        \item $V_{BA}(n,k,t) = 
        \sum_{h=0}^{k} \binom{n}{h} \cdot V_{BA}(n-h,n-h,t-1) = \sum_{h=0}^{k} \binom{n}{h} \cdot t^{n-h}$.\vspace{1mm}
        \item $E_{BA}(n,k,t) =n t\cdot V_{BA}(n-1,k-1,t)$.
    \end{enumerate}
\end{theorem}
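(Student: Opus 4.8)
The plan is to prove both parts by directly enumerating the vertices and edges of the Markov chain that Algorithm~\ref{alg:basealg} builds. The structural fact underlying everything is that the vertex set of this chain is exactly $\set{\bfs\in\cS_{(n,t)}}{\abs{F(\bfs)}\le k}$. On the one hand, no state with more than $k$ complete coupons is ever generated: the layer-by-layer expansion moves every state of $L(k)$ into the terminal set before continuing, so a level-$k$ state is never expanded and none of its strict descendants is created. On the other hand, every $\bfs$ with $\abs{F(\bfs)}\le k$ is produced: reach it from $\bfs_0$ by a path of single-coupon increments that first brings every coordinate in $F(\bfs)$ up to $t-1$ and every coordinate outside $F(\bfs)$ up to its final value $s_j<t$, and only then completes the coordinates of $F(\bfs)$ one at a time; every state on this path other than $\bfs$ has strictly fewer than $k$ complete coupons, hence lies in some level $0,\ldots,k-1$ and is expanded. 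Establishing this characterization carefully is the only delicate point; the rest is counting.

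For part 1, partition the vertices by $h=\abs{F(\bfs)}\in\{0,\ldots,k\}$. Choosing the $h$ complete coordinates costs $\binom{n}{h}$, and each of the remaining $n-h$ coordinates independently takes one of the $t$ values in $\{0,\ldots,t-1\}$, contributing $t^{n-h}$; summing gives $V_{BA}(n,k,t)=\sum_{h=0}^{k}\binom{n}{h}t^{n-h}$. For the recursive form, observe that $t^{n-h}=V_{BA}(n-h,n-h,t-1)$: when the ``collect-all'' parameter equals the number of coupons there is no completeness restriction, so $V_{BA}(m,m,t-1)=\sum_{j=0}^{m}\binom{m}{j}(t-1)^{m-j}=((t-1)+1)^{m}=t^{m}$ by the binomial theorem.

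For part 2, count edges by the transition that produces them. In the chain explored by the algorithm every edge has the form $\bfs'\to\bfs$ with $\bfs=\bfs'+\bfe_i$, where $\bfs'$ is non-terminal, i.e.\ $\abs{F(\bfs')}\le k-1$, and $i$ is an incomplete coordinate of $\bfs'$, i.e.\ $s'_i\in\{0,\ldots,t-1\}$; level-$k$ states have no outgoing edges (they are never expanded), while edges landing in $L(k)$ occur precisely when $s'_i=t-1$ and are included here. Send such an edge to the triple $\bigl(i,\ s'_i,\ (s'_j)_{j\ne i}\bigr)$. This is a bijection: the index $i$ has $n$ choices, the value $s'_i$ has $t$ choices, and for the remaining $n-1$ coordinates note that, since $i$ is incomplete, $F(\bfs')=\set{j\ne i}{s'_j=t}$ already has size at most $k-1$, so $(s'_j)_{j\ne i}$ ranges over exactly the vertex set of the $(n-1,k-1,t)$ instance, contributing $V_{BA}(n-1,k-1,t)$ choices. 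Hence $E_{BA}(n,k,t)=nt\cdot V_{BA}(n-1,k-1,t)$. Equivalently, one may write $E_{BA}(n,k,t)=\sum_{m=0}^{k-1}(n-m)\binom{n}{m}t^{n-m}$ and apply the identity $(n-m)\binom{n}{m}=n\binom{n-1}{m}$ to rewrite this as $n\sum_{m=0}^{k-1}\binom{n-1}{m}t^{n-m}=nt\cdot V_{BA}(n-1,k-1,t)$.

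The main obstacle is not a computation but making the vertex-set description airtight: one must verify that halting upon first entering $L(k)$ neither leaves any state of $\bigcup_{h\le k}L(h)$ out of the explored graph nor admits any state of $\bigcup_{h>k}L(h)$ into it. Once this is settled, part 1 is the partition above and part 2 is the displayed bijection, and both closed forms follow immediately.
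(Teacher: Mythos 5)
Your proof is correct and follows essentially the same approach as the paper: part 1 partitions vertices by the number $h$ of complete coupons to get $\binom{n}{h}t^{n-h}$, and part 2 factors each edge into the incremented incomplete coordinate, its copy count, and the remaining $(n-1,k-1,t)$-state, yielding $nt\cdot V_{BA}(n-1,k-1,t)$. Your additional care in characterizing the explored vertex set as exactly $\set{\bfs}{\abs{F(\bfs)}\le k}$, and your verification of the recursive and closed forms, are correct elaborations of details the paper leaves implicit.
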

\begin{proof}
\begin{enumerate}
    \item We let $h$ be the number of complete coupons in every state, so the value of $h$ is between $0$ and $k$. For a given $h$, there are $\binom{n}{h}$ options to select the indices of the $h$ complete coupons. The coupon count of each of the remaining indices must be between $0$ and $t-1$ so it has $t$ options.
    \item Each edge in this Markov chain is determined by first selecting the incomplete drawn coupon along with its respective number of copies, providing \( nt\) possible options. Among the remaining \( n-1 \) coupons there can be up to \( k-1 \) complete coupons, represented by \( V_{BA}(n-1, k-1, t) \). Note that there can not be $k$ complete coupons while updating an incomplete coupon.
\end{enumerate}
\end{proof}
Note that if $k=n$ then we get that $V_{BA}(n,n,t) = (t+1)^n$ and $E_{BA}(n,n,t) = nt (t+1)^{n-1}$. The next corollary provides upper and lower bounds on the number of states and edges in the Markov chain of the \textbf{BA}.
\begin{corollary} Given $n,k,t$ it holds that
\begin{enumerate}
    \item $t^n = V_{BA}(n,n,t-1) \leq V_{BA}(n,k,t) \leq V_{BA}(n,n,t) = (t+1)^n$,
    \item $n(t-1)^{n} < n(t-1) \cdot t^{n-1} = E_{BA}(n,n,t-1)\leq E_{BA}(n,k,t) \leq E_{BA}(n,n,t) =nt \cdot (t+1)^{n-1} < n(t+1)^{n}$.
\end{enumerate}

Therefore, $V_{BA}(n,k,t) = O((t+1)^{n})$ and $E_{BA}(n,k,t) = O((t+1)^{n} \cdot n)$, which is the time complexity of the \textbf{BA}.

\end{corollary}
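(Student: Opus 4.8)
The plan is to derive both inequality chains directly from the closed-form expressions in Theorem~\ref{th:countsgen}, using only monotonicity in $k$ and the binomial theorem. For part~1, I would start from $V_{BA}(n,k,t) = \sum_{h=0}^{k}\binom{n}{h}t^{n-h}$ and observe that every summand is nonnegative, so this partial sum is non-decreasing in $k$ over the range $0\le k\le n$. Hence $V_{BA}(n,0,t)\le V_{BA}(n,k,t)\le V_{BA}(n,n,t)$. The left endpoint is simply the $h=0$ term, $\binom{n}{0}t^{n}=t^{n}$, which coincides with $V_{BA}(n,n,t-1)=((t-1)+1)^{n}$ via the $k=n$ identity recorded immediately after the theorem; the right endpoint is $\sum_{h=0}^{n}\binom{n}{h}t^{n-h}1^{h}=(t+1)^{n}$ by the binomial theorem. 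This yields the displayed bounds on $V_{BA}$.

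For part~2, I would feed these bounds into the recurrence $E_{BA}(n,k,t)=nt\cdot V_{BA}(n-1,k-1,t)$, which is valid since $k\ge 1$. Applying part~1 with parameters $(n-1,k-1,t)$ gives $t^{n-1}\le V_{BA}(n-1,k-1,t)\le (t+1)^{n-1}$, so multiplying by $nt$ yields $nt^{n}\le E_{BA}(n,k,t)\le nt(t+1)^{n-1}$. It remains to match the two middle quantities to $E_{BA}$ at $k=n$: by the recurrence and the $k=n$ value of $V_{BA}$, $E_{BA}(n,n,t-1)=n(t-1)\cdot V_{BA}(n-1,n-1,t-1)=n(t-1)t^{n-1}$ and $E_{BA}(n,n,t)=nt(t+1)^{n-1}$. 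I would then close the string with the elementary inequalities $(t-1)^{n}<(t-1)t^{n-1}$ and $t(t+1)^{n-1}<(t+1)^{n}$, each following from $t-1<t<t+1$, together with $n(t-1)t^{n-1}\le nt^{n}$; chaining these gives the full inequality string for $E_{BA}$. The big-$O$ conclusions then follow at once from the upper bounds $V_{BA}(n,k,t)\le (t+1)^{n}$ and $E_{BA}(n,k,t)\le nt(t+1)^{n-1}<n(t+1)^{n}$.

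There is no substantive obstacle here; the only points requiring care are stating the monotonicity in $k$ over the correct range and noting that the $h=0$ term is always present (so the lower bound $t^{n}$ holds for every $k\ge 0$), and the bookkeeping that identifies $n(t-1)t^{n-1}$ and $nt(t+1)^{n-1}$ with $E_{BA}(n,n,t-1)$ and $E_{BA}(n,n,t)$ through the recurrence. One should also flag the degenerate case $t=1$: then $V_{BA}(n,n,0)=1=t^{n}$ still holds, but the strict inequality $n(t-1)^{n}<n(t-1)t^{n-1}$ collapses to $0<0$ and should be excluded (the strict bounds are meaningful for $t\ge 2$).
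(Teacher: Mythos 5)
Your proposal is correct and follows exactly the route the paper intends: the corollary is stated without proof, as an immediate consequence of the closed forms in Theorem~\ref{th:countsgen} (monotonicity of the partial sum $\sum_{h=0}^{k}\binom{n}{h}t^{n-h}$ in $k$, the binomial theorem at $k=n$, and the recurrence $E_{BA}(n,k,t)=nt\cdot V_{BA}(n-1,k-1,t)$), which is precisely your argument. Your extra care about the degenerate case $t=1$ (where the strict inequality $n(t-1)^n<n(t-1)t^{n-1}$ collapses) is a valid observation that the paper glosses over, but it does not change the substance.
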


\section{CCP Algorithm for the Uniform Distribution}\label{sec:unidist}

The main goal of this paper is to be able to compute the expectation of the drawing process efficiently, in the general version of the CCP. The approach in \autoref{sec:nonunidist} provides a dynamic programming algorithm that can be calculated, however, the number of states in this model increases exponentially as $n$ grows. In this section, we introduce the \textbf{UDA} (Uniform Distribution Algorithm), and provide a modified analysis of the Markov model that decreases the number of states when $\bfp$ is the uniform distribution (i.e., $p_i=\frac{1}{n}, 1\leq i \leq n$). Therefore, the computation time decreases significantly and becomes polynomial with $n$. This approach is based on merging symmetric states in the Markov chain.

In the collection process, since all coupons have an equal probability of being drawn, any permutation of a given state is equivalent in terms of both probability and expectation (e.g. the states $(0,1,1)$, $(1,0,1)$ and $(1,1,0)$ are analyzed in the same way). To leverage this symmetry, the states are aggregated into a single representative state, which is referred to as the \emph{multiplicity} vector state. Every state vector can be represented as a multiplicity vector, which indicates the number of coupons with a fixed number of copies. The number of copies is between $0$ and $t$, and we let $x_i$ be the number of coupons with $i$ copies. Specifically, a state $\bfs = (s_1, \dots, s_n)$ is expressed as a multiplicity vector state: $(x_0,x_1,\ldots,x_t)$. This representation satisfies $\sum_{i=0}^{t} x_i = n$. Denote by $mv(\bfs)$ the multiplicity vector representation of $\bfs$ (e.g. given $\bfs = (0,0,0,1,1,2)$ then $mv(\bfs)=(x_0=3,x_1=2,x_2=1,x_3=0)$) and $MV(\cS_{(n,t)})$ as the set of all multiplicity vector states of $\cS_{(n,t)}$.

Recall $d(\bfs)$, since $\bfp$ is the uniform distribution, it holds that for all $\bfs \in L(h), p^{F}(\bfs) = \frac{h}{n}$. Thus, $d(\bfs) \sim \text{Geom}\left(\frac{n-h}{n}\right)$ and $\E\left[d(\bfs)\right] = \frac{n}{n-h}$.
Note that given state $mv(\bfs)=(x_0,x_1,\ldots,x_t) \in L(h)$, it follows that $x_t = h$, since $x_t$ indicates the number of complete coupons (i.e., with $t$ copies). This indicates that the terminal states are $L(k)=\set{\bfx}{x_t=k}$. The next claim defines the values of the transition matrix and the number of reachable states (a single draw). In this case, $I(\bfx',\bfx)=i$ is read with $i$ copies gaining one additional copy, inducing the transition from $\bfx'$ to $\bfx$.
\begin{claim}
Given 
    $\bfs'\in \cS_{(n,t)}, \bfx' = mv(\bfs')=(x_0',x_1',\ldots,x_t') \in MV(\cS_{(n,t)})$
 it holds that
\begin{enumerate}
    \item The size of the set of one-step reachable states from $s'$ is $\abs{\set{\bfx=mv(\bfs)}{\bfs'\in A(\bfs)}}=\sum_{i=0}^{t-1}\mathbb{I}_{ \{ x'_i  > 0 \}}$.
    \item $p^{F}(\bfs') = \frac{x_t'}{n}$.
    \item For all $\bfx\in \set{\bfx=mv(\bfs)}{\bfs'\in A(\bfs)} : 0\leq I(\bfx',\bfx)\leq t-1$ and $ M_{\bfx',\bfx} = \frac{x'_{I(\bfx',\bfx)}}{n - x_t'}$.
    \item For all $\bfx\in \set{\bfx=mv(\bfs)}{\bfs'\in A(\bfs)}: \bfx = (x_0,\ldots,x_{I(\bfx',\bfx)}-1,x_{I(\bfx',\bfx)+1}+1,x_t)$.
\end{enumerate}
\end{claim}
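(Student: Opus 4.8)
The plan is to verify the four parts directly from Definition~\ref{def:bam} and Definition~\ref{def:pfs}, using only the observation that $mv$ collapses permutation-equivalent states and that under the uniform distribution the transition probabilities depend on a state only through its multiplicity vector. Throughout, fix $\bfs'\in\cS_{(n,t)}$ with $\bfx'=mv(\bfs')=(x_0',\dots,x_t')$, and recall that for any $\bfx=mv(\bfs)$ with $\bfs'\in A(\bfs)$ we have $\widehat{\bfs}-\widehat{\bfs'}=1$, so exactly one coordinate of $\bfs$ exceeds the corresponding coordinate of $\bfs'$, and it does so by exactly $1$; call the index of that coordinate $j$, and note $i:=s_j'=I(\bfx',\bfx)$ is the number of copies that the affected coupon held before the draw.

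For part~(1), I would argue that a one-step successor of $\bfx'$ is obtained by picking some copy-count $i\in\{0,\dots,t-1\}$ (we cannot increment a coupon that already has $t$ copies, since complete coupons contribute zero probability and remain fixed) with $x_i'>0$ (there must actually be a coupon holding $i$ copies), moving one such coupon from the ``$i$-copies'' class to the ``$(i{+}1)$-copies'' class. Two distinct choices of $i$ give distinct multiplicity vectors, and any choice of which concrete coupon to increment within a fixed class yields the same multiplicity vector — this is exactly the symmetry being exploited. Hence the number of distinct successors is $\sum_{i=0}^{t-1}\mathbb{I}_{\{x_i'>0\}}$. Part~(2) is immediate: $F(\bfs')$ is the set of coupons with $t$ copies, which has size $x_t'$ by definition of the multiplicity vector, so $p^F(\bfs')=\sum_{i\in F(\bfs')}p_i=\sum_{i\in F(\bfs')}\frac1n=\frac{x_t'}{n}$.

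For part~(3), the bound $0\le I(\bfx',\bfx)\le t-1$ is the same point as in part~(1): the incremented coupon cannot have started at $t$ copies. For the transition probability, I start from the already-established formula $M_{\bfs',\bfs}=\frac{p_{I(\bfs',\bfs)}}{1-p^F(\bfs')}$ from the earlier Claim, but now I must account for the merging: $M_{\bfx',\bfx}$ is the total conditional probability of landing in \emph{any} representative of $\bfx$ given we are at \emph{some} representative of $\bfx'$. Conditioned on drawing an incomplete coupon, the probability of incrementing a specific coupon currently holding $i=I(\bfx',\bfx)$ copies is $\frac{p_{\text{that coupon}}}{1-p^F(\bfs')}=\frac{1/n}{1-x_t'/n}=\frac{1}{n-x_t'}$, and there are exactly $x_i'$ such coupons, each of which produces (a representative of) the same target $\bfx$; summing gives $M_{\bfx',\bfx}=\frac{x_{I(\bfx',\bfx)}'}{n-x_t'}$. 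Part~(4) then just records the bookkeeping: incrementing one coupon from the $i$-class to the $(i{+}1)$-class changes $x_i'\mapsto x_i'-1$ and $x_{i+1}'\mapsto x_{i+1}'+1$ with all other entries (in particular $x_t$, since $i+1\le t$ forces no change to $x_t$ unless $i=t-1$, in which case the target is still consistent) unchanged; so $\bfx=(x_0,\dots,x_{I(\bfx',\bfx)}-1,x_{I(\bfx',\bfx)+1}+1,\dots,x_t)$.

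The main obstacle is part~(3): one has to be careful that $M_{\bfx',\bfx}$ is genuinely well-defined, i.e., that the conditional probability of reaching the merged class $\bfx$ is the same no matter which representative $\bfs'$ of $\bfx'$ one starts from, and that the merged edge weight correctly aggregates the $x_i'$ parallel transitions in the unmerged chain without double-counting. This requires invoking the permutation-invariance of the uniform distribution explicitly — a permutation $\pi$ acting on coupon indices sends $\bfs'\mapsto\pi\bfs'$, $\bfs\mapsto\pi\bfs$, preserves $A(\cdot)$, preserves $p^F$, and maps the edge $(\bfs',\bfs)$ to an edge with the same weight — so the fibers of $mv$ form a consistent quotient of the Markov chain. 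Everything else is a direct substitution of $p_i=1/n$ and counting within multiplicity classes.
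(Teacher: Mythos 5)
Your proposal is correct: the paper states this claim without proof, treating it as a direct consequence of Definitions~\ref{def:bam} and~\ref{def:pfs} together with the permutation symmetry of the uniform distribution, and your verification (substituting $p_i=\tfrac1n$ into $M_{\bfs',\bfs}=\tfrac{p_{I(\bfs',\bfs)}}{1-p^F(\bfs')}$ and aggregating the $x'_i$ parallel edges into one merged transition) is exactly the intended reasoning. Your added care in part~(3) about why the merged chain is well defined — that every representative of $\bfx'$ yields the same total conditional probability into the class $\bfx$, i.e.\ that the fibers of $mv$ give a consistent quotient of the Markov chain — is a point the paper leaves implicit, and stating it explicitly strengthens rather than departs from the argument.
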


\begin{example}
    Given 
        $\bfs' = (0,0,0,1,1,2)\in \cS_{(6,3)},\bfx'=mv(\bfs')=(x_0'=3,x_1'=2,x_2'=1,x_3'=0)\in MV(\cS_{(6,3)})$
   
    \begin{itemize}
    \item $I(\bfx',\bfx)=0: \bfx = (x_0=2,x_1=3,x_2=1,x_3=0)$.
    \item $I(\bfx',\bfx)=1:\bfx= (x_0=3,x_1=1,x_2=2,x_3=0)$.
    \item $I(\bfx',\bfx)=2: \bfx= (x_0=3,x_1=2,x_2=0,x_3=1)$.
    \end{itemize}
   From state $\bfx'$, there are three possible transitions, each altering one index; e.g., $I(\bfx',\bfx)=0$ increases the coupons with one copy by one and decreases those with zero copies by one.
\end{example}

The rest of the analysis of the values $M_{s_0,s}^{(\widehat{s})}$, $\E[D_{\bfs}|X_{\widehat{s}}=s]$, $\E[D_{\bfs}^2|X_{\widehat{s}}=s]$, remains the same as in \autoref{cl:BAd}, \autoref{cl:BAd2}, and \autoref{th:BA} but with the multiplicity vector states $MV(\cS_{(n,t)})$ as the set of states. The following provides an analysis of the time complexity of the \textbf{UDA}, as discussed in \autoref{sec:nonunidist}. Similarly, let \( V_{UDA}(n,k,t) \) and \( E_{UDA}(n,k,t) \) denote the number of vertices and edges in the \textbf{UDA} Markov chain, respectively. 
We first analyze the case $k=n$, which underlies the general complexity analysis.
\begin{lemma}\label{lem:ver_uni_edge_uni}
Given $n$ and $t$ it holds that
\begin{enumerate}
    \item $V_{UDA}(n,n,t) = \binom{n+t}{t}$.
    \item $E_{UDA}(n,n,t) = t \cdot \binom{n+t-1}{t} = n \cdot \binom{n+t-1}{n}$.
\end{enumerate}
\end{lemma}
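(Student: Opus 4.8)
The plan is to read off both quantities directly from the combinatorial description of the multiplicity‑vector state space $MV(\cS_{(n,t)})$ and the one‑step transition rule.

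For part 1, when $k=n$ the relevant part of the chain is the whole space $MV(\cS_{(n,t)})$, i.e. the set of vectors $\bfx=(x_0,x_1,\ldots,x_t)$ with nonnegative integer entries and $\sum_{i=0}^{t}x_i=n$ (this includes the unique terminal state $(0,\ldots,0,n)\in L(n)$). First I would note that every such vector is genuinely a vertex: starting from $mv(\bfs_0)=(n,0,\ldots,0)$, the transition rule lets us move one unit of ``mass'' from level $i$ to level $i+1$ whenever $x_i>0$, so raising coupons one level at a time realizes any prescribed profile. Counting these vectors is then just counting weak compositions of $n$ into $t+1$ parts, which by stars and bars equals $\binom{n+t}{t}$.

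For part 2, I would count edges by summing out‑degrees. By the claim describing the one‑step reachable states, the out‑degree of $\bfx'=(x_0',\ldots,x_t')$ is $\sum_{i=0}^{t-1}\mathbb{I}_{\{x_i'>0\}}$, and each outgoing edge is uniquely labeled by the index $i\in\{0,\ldots,t-1\}$ of the incremented level (the target being then determined), so there is no double counting. Exchanging the order of summation,
\[
E_{UDA}(n,n,t)=\sum_{\bfx'\in MV(\cS_{(n,t)})}\ \sum_{i=0}^{t-1}\mathbb{I}_{\{x_i'>0\}}=\sum_{i=0}^{t-1}\bigl|\{\bfx'\in MV(\cS_{(n,t)}):x_i'>0\}\bigr|.
\]
For a fixed $i$, the substitution $y_i=x_i'-1\ge 0$ turns the condition $x_i'>0$ into counting weak compositions of $n-1$ into $t+1$ parts, giving $\binom{n-1+t}{t}=\binom{n+t-1}{t}$ states, independently of $i$. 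Summing over the $t$ choices of $i$ yields $E_{UDA}(n,n,t)=t\binom{n+t-1}{t}$, and the elementary identity $t\binom{n+t-1}{t}=\frac{(n+t-1)!}{(t-1)!\,(n-1)!}=n\binom{n+t-1}{n}$ gives the second form.

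The argument is elementary; the only points that need a moment of care are (i) verifying that all $\binom{n+t}{t}$ multiplicity vectors are actually vertices of the chain (the reachability remark above), and (ii) ensuring the out‑degree formula from the claim is summed without double‑counting edges (handled by the unique labeling by the incremented level). I do not anticipate a genuine obstacle.
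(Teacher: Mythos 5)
Your proof is correct and follows essentially the same approach as the paper: stars and bars for the vertex count, and an edge count obtained by summing, over the $t$ possible levels whose occupancy is forced positive, the $\binom{n+t-1}{t}$ compositions satisfying that constraint. The only (immaterial) difference is that you sum out-degrees indexed by the decremented level $i\in\{0,\ldots,t-1\}$, whereas the paper sums in-degrees indexed by the incremented level $i\in\{1,\ldots,t\}$.
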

\begin{proof}
\begin{enumerate}
    \item From the definition of the multiplicity vector states it holds that the number of states equals the number of solutions over the non-negative integers to the equation $x_0+x_1+\cdots + x_t = n$, and this number is $\binom{n+t}{t}$.
    \item We count the incoming edges according to the value of $x_i$ (in the multiplicity vector state) that is changed on every incoming edge. Assume an incoming edge changes the value of $x_i$, for $1\leq i \leq t$, in the multiplicity vector $(x_0,x_1,\ldots,x_t)$, so this value increases by one. Hence, the number of options is the number of solutions over the non-negative integers to the equation $x_0+x_1+\cdots + x_t = n$, where $x_i\geq 1$ and this number is $\binom{n+t-1}{t}$.
\end{enumerate}
\end{proof}
\begin{theorem}\label{th:countssym}
    Given $n,k,t$ it holds that
    \begin{enumerate}
        \item $V_{UDA}(n,k,t) = V_{UDA}(n,n,t) - V_{UDA}(n-(k+1),n-(k+1),t) = \binom{n+t}{n} - \binom{n+t-(k+1)}{n-(k+1)}$.
        \item 
        $E_{UDA}(n,k,t) = t \cdot \binom{n+t-1}{t} - t \cdot \binom{n+t-k-1}{t}$.
    \end{enumerate}
\end{theorem}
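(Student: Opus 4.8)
The strategy is to express the quantities for general $k$ as a "difference" against the $k=n$ case, and then invoke Lemma \ref{lem:ver_uni_edge_uni}. The key observation is that the Markov chain of the \textbf{UDA} for parameters $(n,k,t)$ is obtained from the chain for $(n,n,t)$ by removing exactly those states and edges that "lie beyond" level $L(k)$, i.e. those that are only reachable after already having passed through a terminal state. Concretely, a multiplicity-vector state $\bfx=(x_0,\dots,x_t)$ with $x_t = h$ belongs to the $(n,k,t)$ chain if and only if $h \le k$ (states with $h=k$ are terminal and states with $h>k$ are absent). So I would first establish the bijection claim: the set of deleted states $\{\bfx : x_t \ge k+1\}$ is in bijection with $MV(\cS_{(n-(k+1),\,n-(k+1),\,t)})$, via the map $(x_0,\dots,x_{t-1},x_t)\mapsto(x_0,\dots,x_{t-1},x_t-(k+1))$. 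This map is clearly injective, its image is exactly the set of nonnegative solutions of $x_0+\cdots+x_t = n-(k+1)$, and its inverse adds $k+1$ back to the last coordinate. Counting via Lemma \ref{lem:ver_uni_edge_uni}(1) then gives part (1):
\begin{align*}
V_{UDA}(n,k,t) &= V_{UDA}(n,n,t) - \abs{\{\bfx : x_t\ge k+1\}} \\
&= \binom{n+t}{n} - V_{UDA}(n-(k+1),n-(k+1),t) = \binom{n+t}{n} - \binom{n+t-(k+1)}{n-(k+1)}.
\end{align*}

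For part (2), I would count edges by the same deletion principle, but one must be careful about which edges disappear. An edge of the $(n,n,t)$ chain goes from a state with $x_t = h$ to one with $x_t \in \{h, h+1\}$; in the \textbf{UDA} chain for parameter $k$, an edge survives iff its tail satisfies $x_t \le k-1$ (no edges leave a terminal state, by the analogue of the earlier claim that $M_{\bfs',\bfs}=0$ for $\bfs'\in L(k)$). So the deleted edges are exactly those whose tail state $\bfx'$ has $x'_t \ge k$. Rather than counting deleted edges directly, it is cleaner to reuse the incoming-edge count from the proof of Lemma \ref{lem:ver_uni_edge_uni}(2): there, edges are classified by which coordinate $x_i$ ($1\le i\le t$) is incremented. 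I would split into the cases $1 \le i \le t-1$ (the incremented coordinate is not $x_t$, so the head — and hence the tail — has $x_t = h \le$ whatever, and survivorship is governed by the tail having $x_t \le k-1$) and $i = t$ (the edge completes a coupon, head has $x_t = h+1$). Carrying out the subset-of-solutions counts in each case, the deleted edges are those with tail $x'_t \ge k$; translating by the substitution $x'_t \mapsto x'_t - k$ identifies the deleted edge set with $E_{UDA}$ of the shifted parameters, giving $t\cdot\binom{n+t-k-1}{t}$ deleted edges, hence $E_{UDA}(n,k,t) = t\binom{n+t-1}{t} - t\binom{n+t-k-1}{t}$.

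The main obstacle will be the bookkeeping in part (2): the edge-deletion argument must correctly account for the boundary behaviour at level $k$ — specifically, edges whose tail is in $L(k-1)$ and which increment $x_t$ (landing in $L(k)$) must be \emph{kept}, while all edges with tail in $L(k),L(k+1),\dots$ must be dropped, and one has to check that the "shift by $k$" bijection lines up precisely with this cutoff rather than being off by one. The cleanest way to avoid an off-by-one error is to mimic the structure of the proof of Theorem \ref{th:countsgen}(2) / Lemma \ref{lem:ver_uni_edge_uni}(2): fix the incremented coordinate, write the surviving/deleted configurations as stars-and-bars counts with an explicit lower-bound constraint on $x_t$, and read off the binomial identity. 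Parts (1) and (2) are otherwise routine once the deletion-and-shift bijections are stated correctly.
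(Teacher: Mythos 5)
Your proposal matches the paper's proof: part (1) is the identical complementary count of multiplicity vectors with $x_t \ge k+1$ via the shift bijection against $V_{UDA}(n-(k+1),n-(k+1),t)$, and part (2) is the same incoming-edge classification from Lemma \ref{lem:ver_uni_edge_uni}(2) with deletion of the edges originating at level $k$ or above, yielding $t\binom{n+t-k-1}{t}$ removed edges. Your explicit treatment of the boundary between levels $k-1$ and $k$ and the shift bijection for the deleted edge set are refinements the paper leaves implicit, but the argument is essentially the same.
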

\begin{proof}

\begin{enumerate}
    \item We count the number of states similarly to the proof of \autoref{lem:ver_uni_edge_uni} but under the constraint that $x_t\leq k$. This equals to the value of \( V_{UDA}(n,n,t) \) and then subtracting the number of solutions where $x_t\geq k+1$ which equals to $\binom{n+t-(k+1)}{n-(k+1)}$ resulting in $\binom{n+t}{n} - \binom{n+t-(k+1)}{n-(k+1)}$.
    \item We count the incoming edges similar to the proof of \autoref{lem:ver_uni_edge_uni}, but we exclude all edges originating from states above level $k$ (since level $k$ states are terminal states). Assume an incoming edge changes the value of $x_i$ for $1\leq i \leq t$ by one. Hence, we subtract the number of solutions of the equation $x_0+x_1+\cdots + x_t = n$, where $x_t \geq k$, and this number is $\binom{n+t-k-1}{t}$.  
\end{enumerate}
\end{proof}
The next corollary provides upper and lower bounds on the number of states and edges in \textbf{UDA} Markov chain.
\begin{corollary} Given $n,k,t$ it holds that
\begin{enumerate}
    \item $\binom{n+t-1}{n} \leq V_{UDA}(n,k,t) \leq \binom{n+t}{n}$.
    \item $n \cdot \binom{n+t-2}{n} \leq E_{UDA}(n,k,t) \leq n \cdot \binom{n+t-1}{n}$.
\end{enumerate}

Therefore, $V_{UDA}(n,k,t) = O(n^{t})$ and $E_{UDA}(n,k,t) = O(n^t)$, which is the time complexity of \textbf{UDA}.
\end{corollary}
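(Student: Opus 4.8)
The plan is to derive both double inequalities directly from the exact counts in \autoref{th:countssym}, namely $V_{UDA}(n,k,t) = \binom{n+t}{n} - \binom{n+t-(k+1)}{n-(k+1)}$ and $E_{UDA}(n,k,t) = t\binom{n+t-1}{t} - t\binom{n+t-k-1}{t}$, by exploiting that in each formula the subtracted term is a binomial coefficient that is monotone in $k$. First I would observe that, for fixed $n$ and $t$, the map $m \mapsto \binom{m}{t}$ is non-decreasing in $m$, so $\binom{n+t-(k+1)}{n-(k+1)} = \binom{n+t-(k+1)}{t}$ and $\binom{n+t-k-1}{t}$ are both non-increasing in $k$; equivalently, $V_{UDA}(n,k,t)$ and $E_{UDA}(n,k,t)$ are non-decreasing in $k$ over the range $1 \le k \le n$. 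Hence the maximum is attained at $k=n$, where the subtracted terms vanish, and the minimum at $k=1$.

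For the upper bounds this immediately gives $V_{UDA}(n,k,t) \le V_{UDA}(n,n,t)$ and $E_{UDA}(n,k,t) \le E_{UDA}(n,n,t)$, and by \autoref{lem:ver_uni_edge_uni} the right-hand sides equal $\binom{n+t}{n}$ and $t\binom{n+t-1}{t} = n\binom{n+t-1}{n}$, respectively. For the lower bounds I would evaluate the $k=1$ case and simplify with Pascal's rule: $V_{UDA}(n,1,t) = \binom{n+t}{n} - \binom{n+t-2}{n-2}$, and since $\binom{n+t}{n} - \binom{n+t-1}{n} = \binom{n+t-1}{n-1} = \binom{n+t-2}{n-1} + \binom{n+t-2}{n-2} \ge \binom{n+t-2}{n-2}$, we obtain $V_{UDA}(n,1,t) \ge \binom{n+t-1}{n}$. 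Similarly, $\binom{n+t-1}{t} = \binom{n+t-2}{t} + \binom{n+t-2}{t-1}$ yields $E_{UDA}(n,1,t) = t\binom{n+t-2}{t-1}$, and the absorption identity $n\binom{n+t-2}{t-2} = (t-1)\binom{n+t-2}{t-1}$ together with $\binom{n+t-2}{n} = \binom{n+t-2}{t-2}$ gives $n\binom{n+t-2}{n} = (t-1)\binom{n+t-2}{t-1} \le t\binom{n+t-2}{t-1} = E_{UDA}(n,1,t)$, which is the desired bound.

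For the asymptotic statement I would note that for fixed $t$ the quantity $\binom{n+t}{n} = \binom{n+t}{t} = \frac{1}{t!}\prod_{j=1}^{t}(n+j)$ is a degree-$t$ polynomial in $n$, hence $O(n^t)$, while $n\binom{n+t-1}{n} = n\binom{n+t-1}{t-1}$ is $n$ times a degree-$(t-1)$ polynomial, hence also $O(n^t)$; since these are exactly the upper bounds just established, $V_{UDA}(n,k,t) = O(n^t)$ and $E_{UDA}(n,k,t) = O(n^t)$, which is the claimed time complexity of \textbf{UDA}.

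I do not expect a genuine obstacle: once the monotonicity in $k$ is isolated, everything reduces to Pascal's rule and the binomial absorption identity. The only step requiring care is the lower-bound computation in the degenerate regimes — small $t$ (for $t=1$ one has $n\binom{n+t-2}{n}=0$ and the edge lower bound holds trivially) and the boundary values $k \in \{1,n\}$ — where one should confirm that the binomial coefficients appearing in \autoref{th:countssym} are read with the standard convention $\binom{m}{j}=0$ for $j<0$ or $j>m$, so that the monotonicity argument and the Pascal manipulations remain valid throughout.
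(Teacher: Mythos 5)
Your proposal is correct and follows the route the paper intends: the corollary is stated without proof as a direct consequence of \autoref{th:countssym} and \autoref{lem:ver_uni_edge_uni}, and you derive it exactly from those exact counts via monotonicity in $k$, Pascal's rule, and the absorption identity. Your additional care with the $k=1$ lower bounds, the $t=1$ degenerate case, and the convention $\binom{m}{j}=0$ for $j<0$ only makes the argument more complete than what the paper records.
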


\section{CCP Algorithm for Any Distribution Utilizing Symmetry}\label{sec:groupsym}
The approach in \autoref{sec:nonunidist} does provide a dynamic programming algorithm that deals with any distribution
$\bfp$ to draw each of the $n$ coupons, but it lacks of exponential time complexity compared to the uniform method in \autoref{sec:unidist} with polynomial time complexity. In this section, we will introduce the \textbf{DPSA} (Drawing Probability Symmetry Algorithm), and show how to utilize the symmetry between coupons with the same drawing probability, to reduce the computation time in the same way as we did in \autoref{sec:unidist}. This is the more probable case where there are several types of coupons that have the same probability. For example, there are two types of coupons, each with a different probability.

\begin{definition}
  To analyze the symmetry in the drawing probability vector $\bfp$, let us denote the following notations:
  \begin{enumerate}
    \item
    $G$ is the number of unique values in the probabilities vector $\bfp$, that is, $G=\abs{\set{p_i\in \bfp}{1 \leq i \leq n}}$.
    \item
    $q_1, \dots, q_G$ are the unique values, that is, $\set{p_i\in \bfp}{1 \leq i \leq n}$, where $\forall 1\leq g \leq G - 1: q_g > q_{g+1} > 0$.
    \item
    For $1 \leq g \leq G : C_g\triangleq \{ i | p_i = q_g , p_i \in \bfp\}$ is the $g$-th set of coupons. Note that $\sum_{g=1}^{G} |C_g| = n$.
  \end{enumerate}
\end{definition}

Each state $\bfs\in \cS_{(n,t)}$ in this Markov chain is described as a vector that contains $G$ \emph{multiplicity} vectors as in \autoref{sec:unidist}. We let $x_{(g,i)}$ be the number of coupons with $i$ draws in the $g$-th set. Specifically, a state $\bfs = (s_1, \dots, s_n)$ is expressed as $G$ multiplicity vectors state: $\Bigl((x_{(1,0)},x_{(1,1)},\ldots,x_{(1,t)}),\ldots,(x_{(G,0)},x_{(G,1)},\ldots,x_{(G,t)})\Bigr)$. This representation satisfies $\forall 1\leq g \leq G : \sum_{i=0}^{t} x_{(g,i)} = |C_g|$. For convenience let us denote $v_g:= (x_{(g,0)},x_{(g,1)},\ldots,x_{(g,t)})$, thus, $\bfs$ can be represented as  $(v_1 \dots, v_G)$. Denote $mv_G(\bfs)$ as the $G$ multiplicity vectors representation of $\bfs$ and $MV_G(\cS_{(n,t)})$ as the set of all $G$ multiplicity vectors states of $\cS_{(n,t)}$.
Note that given a state $\bfs \in L(h)$ it follows that $h = \sum_{g=1}^{G} x_{(g,t)}$. Since for all $1\leq g \leq G, x_{(g,t)}$ indicates the number of complete coupons (i.e., with $t$ copies). This indicates that the terminal states are $L(k)=\set{\bfx}{\sum_{g=1}^{G} x_{(g,t)}=k}$.

Recall $d(\bfs)$, since $\bfp$ is the uniform distribution, it holds that for all $\bfs \in L(h), p^{F}(\bfs) = \frac{h}{n}$. Thus, $d(\bfs) \sim \text{Geom}\left(\frac{n-h}{n}\right)$ and $\E\left[d(\bfs)\right] = \frac{n}{n-h}$.
Note that given state $mv(\bfs)=(x_0,x_1,\ldots,x_t) \in L(h)$, it follows that $x_t = h$, since $x_t$ indicates the number of complete coupons (i.e., with $t$ copies). This indicates that the terminal states are $L(k)=\set{\bfx}{x_t=k}$. The next claim defines the values of the transition matrix and the number of reachable states (a single draw). In this case, $I(\bfx',\bfx)=i$ is read with $i$ copies gains one additional copy, inducing the transition from $\bfx'$ to $\bfx$.

The next claim defines the values of the transition matrix and the number of reachable states (a single draw). In this case, $I(\bfx',\bfx)=(g,i)$ is read with $i$ copies from group $g$ that gain one additional copy, inducing the transition from $\bfx'$ to $\bfx$.
\begin{claim}
Given $\bfs' \in \cS_{(n,t)},\bfx'=mv_G(\bfs')=\Bigl((x'_{(1,0)},x'_{(1,1)},\ldots,x'_{(1,t)}),\ldots,(x'_{(G,0)},x'_{(G,1)},\ldots,x'_{(G,t)})\Bigr)\in MV_G(\cS_{(n,t)})$ it holds that

\begin{enumerate}
    \item $\abs{\set{\bfx=mv_G(\bfs)}{\bfs'\in A(\bfs)}}=\sum_{g=1}^{G} \sum_{i=0}^{t-1}\mathbb{I}_{ \{ x'_{(g,i)}  > 0 \}}$.
    \item $p^{F}(\bfs') = \sum_{g=1}^{G} x'_{(g,t)} \cdot q_g$.
    \item $\forall \bfx\in \set{\bfx=mv_G(\bfs)}{\bfs'\in A(\bfs)} : I(\bfx',\bfx) \in \{( 1\leq g \leq G,0\leq i\leq t-1)\}$ and w.l.o.g $ M_{\bfx',\bfx} = \frac{x'_{(g,i)} \cdot q_g}{1 - p^{F}(\bfs')}$.
    \item \(\forall \bfx\in \set{\bfx=mv_G(\bfs)}{\bfs'\in A(\bfs)} : \bfx = \Bigl(v_1,\ldots, (x'_{(g,0)},\ldots,x'_{I(\bfx',\bfx)}-1,x'_{I(\bfx',\bfx)+1}+1,x'_{(g,t)}),\ldots,v_G\Bigr) \).  
\end{enumerate}
\end{claim}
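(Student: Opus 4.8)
The plan is to derive all four parts directly from \autoref{def:bam} and \autoref{def:pfs}, together with the single-representative transition formula $M_{\bfs',\bfs}=p_{I(\bfs',\bfs)}/(1-p^{F}(\bfs'))$ established by the earlier claim in \autoref{sec:def}, by pushing each statement through the quotient map $mv_G$ that identifies states agreeing up to a permutation of labels within each probability class $C_g$. The only conceptual point to settle first is that this quotient is well behaved: coupons sharing a drawing probability $q_g$ are exchangeable, so $p^{F}$, the one-step transition probabilities, and the draw-count statistics of a state depend on it only through $mv_G(\bfs)$; consequently the lumped process on $MV_G(\cS_{(n,t)})$ is again a Markov chain, and its transition probability from $\bfx'$ to $\bfx$ is the sum of the original probabilities $M_{\bfs',\bfs}$ over all labelled $\bfs$ in the fibre of $\bfx$. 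I would state this lumpability once and then treat the four items as bookkeeping.

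For part~2, by \autoref{def:pfs} we have $p^{F}(\bfs')=\sum_{i\in F(\bfs')}p_i$; the complete coupons are exactly those holding $t$ copies, and within group $g$ there are $x'_{(g,t)}$ of them, each contributing $q_g$, so the sum collapses to $\sum_{g=1}^{G}x'_{(g,t)}q_g$. For part~1, a one-step move out of $\bfs'$ is a ``tracked'' draw, hence increments some coupon of group $g$ currently holding $i$ copies with $0\le i\le t-1$; two such draws produce the same $G$-multiplicity-vector state iff they share the same pair $(g,i)$, and a pair $(g,i)$ is realizable iff at least one coupon of group $g$ holds $i$ copies, i.e.\ $x'_{(g,i)}>0$. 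Counting the realizable pairs gives $\sum_{g=1}^{G}\sum_{i=0}^{t-1}\mathbb{I}_{\{x'_{(g,i)}>0\}}$.

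For part~3, fix a reachable $\bfx$ with $I(\bfx',\bfx)=(g,i)$; since the drawn coupon must be incomplete we have $0\le i\le t-1$ and $1\le g\le G$, which is the stated index range. By lumpability, $M_{\bfx',\bfx}$ is the sum, over the $x'_{(g,i)}$ coupons of group $g$ currently holding $i$ copies, of the original single-step probabilities $p_{I(\bfs',\bfs)}/(1-p^{F}(\bfs'))$; each such coupon has probability $q_g$, and by part~2 the factor $p^{F}(\bfs')$ depends only on $\bfx'$, so the sum equals $x'_{(g,i)}q_g/(1-p^{F}(\bfs'))$. Finally, part~4 records the effect of that increment on the multiplicity vectors: the coupon that held $i$ copies now holds $i+1$, so $x_{(g,i)}$ drops by one, $x_{(g,i+1)}$ rises by one, every other entry of $v_g$ is unchanged, and $v_{g'}$ for $g'\ne g$ is untouched, which is exactly the displayed form.

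The main obstacle --- really the only non-routine step --- is the lumpability argument of the first paragraph: one must check that permuting coupon labels inside each $C_g$ is an automorphism of the labelled chain of \autoref{def:bam} (it fixes $\bfp$, the state set, and the transition probabilities), so that the partition into $mv_G$-fibres is a lumpable partition and the quotient chain's transition matrix is the fibrewise sum of $M$. Once that is in place, parts~1--4 are immediate translations of \autoref{def:bam}, \autoref{def:pfs}, and the earlier transition-probability claim, and the moment recursions and terminal-state formula of \autoref{sec:nonunidist}--\autoref{sec:unidist} carry over verbatim with $MV_G(\cS_{(n,t)})$ as the state set.
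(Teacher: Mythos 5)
Your proof is correct, and it follows the route the paper itself relies on: the paper states this claim without proof, treating it as an immediate consequence of \autoref{def:bam}, \autoref{def:pfs}, and the prose description of merging states that agree up to permutations within each probability class $C_g$. Your explicit lumpability argument --- that relabelling within each $C_g$ is an automorphism of the labelled chain, so the $mv_G$-fibres form a lumpable partition and $M_{\bfx',\bfx}$ is the fibrewise sum $x'_{(g,i)}q_g/(1-p^{F}(\bfs'))$ --- is exactly the justification the paper leaves implicit, and the remaining four items are the same bookkeeping the paper intends.
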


The rest of the analysis of the values $M_{s_0,s}^{(\widehat{s})}$, $\E[D_{\bfs}|X_{\widehat{s}}=s]$, $\E[D_{\bfs}^2|X_{\widehat{s}}=s]$, remains the same as in \autoref{cl:BAd}, \autoref{cl:BAd2}, and \autoref{th:BA}  but with the $G$ multiplicity vectors states $MV_G(\cS_{(n,t)})$ as the set of states. The following provides an analysis of the time complexity of the \textbf{DPSA}, as discussed in \autoref{sec:nonunidist}. Similarly, let \( V_{DPSA}(n,k,t) \) and \( E_{DPSA}(n,k,t) \) denote the number of vertices and edges in the \textbf{DPSA} Markov chain, respectively. For convenience, we define \( c_g := |C_g| \).

\begin{lemma}\label{num-vetrtices-non-uniform}
Given $n,t\text{ and }\bfp$ it holds that $V_{DPSA}^{\bfp}(n,n,t) = \prod_{g=1}^{G} \binom{c_g+t}{t}$.
\end{lemma}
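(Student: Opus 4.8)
The plan is to count the vertices of the \textbf{DPSA} Markov chain directly, by observing that when $k=n$ the state set is an \emph{unconstrained} Cartesian product of per-group multiplicity vectors, and then applying a stars-and-bars count to each factor.

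First I would recall the structure of a state: by definition, an element of $MV_G(\cS_{(n,t)})$ is a tuple $(v_1,\ldots,v_G)$ with $v_g=(x_{(g,0)},\ldots,x_{(g,t)})\in\Z_{\ge 0}^{t+1}$ and $\sum_{i=0}^{t} x_{(g,i)}=c_g$ for every $g$. These per-group sum conditions are the only constraints. In particular, as in the uniform case --- where $V_{UDA}(n,n,t)=\binom{n+t}{t}$ counts \emph{every} multiplicity vector state --- when $k=n$ no state is excluded from the vertex set: the terminal level $L(n)$ is the single all-complete state and still contributes a vertex, and there is no further relation linking the groups.

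Next I would exploit independence across groups: the vector $v_g$ is subject only to a constraint internal to group $g$, so the state set factors as the product over $g=1,\ldots,G$ of the sets $\{v_g\in\Z_{\ge 0}^{t+1} : \sum_{i=0}^{t} x_{(g,i)}=c_g\}$, and hence $V_{DPSA}^{\bfp}(n,n,t)=\prod_{g=1}^{G} N_g$, where $N_g$ is the cardinality of the $g$-th factor. By the standard stars-and-bars argument, $N_g$ is the number of nonnegative integer solutions of an equation in $t+1$ unknowns summing to $c_g$, namely $N_g=\binom{c_g+t}{t}$. Multiplying these gives $\prod_{g=1}^{G}\binom{c_g+t}{t}$, as claimed; as a consistency check, taking $G=1$ and $c_1=n$ recovers $\binom{n+t}{t}=V_{UDA}(n,n,t)$ of \autoref{lem:ver_uni_edge_uni}. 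There is essentially no obstacle here: the only point requiring care is the bookkeeping convention that, for $k=n$, the vertex count includes every multiplicity vector state (mirroring the uniform case), together with the observation that the per-group constraints are genuinely independent so that the product rule applies --- both facts being immediate from the definition of $MV_G(\cS_{(n,t)})$.
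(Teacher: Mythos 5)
Your proof is correct and follows essentially the same route as the paper's: the paper likewise counts states as a product over groups of stars-and-bars counts $\binom{c_g+t}{t}$, generalizing the argument of \autoref{lem:ver_uni_edge_uni}. You simply spell out the independence of the per-group constraints and the product rule in more detail than the paper does.
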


\begin{proof}
From the definition of the $G$ multiplicity vectors state and as a generalization to \autoref{lem:ver_uni_edge_uni} proof it holds that the number of states is the multiplication of the number of solutions for each set $g$, i.e., $\binom{c_g + t}{t}$.
\end{proof}

\begin{lemma}\label{num-edges-non-uniform}
Given $n,t\text{ and }\bfp$ it holds that $E_{DPSA}^{\bfp}(n,n,t) = t\cdot \biggl(\sum_{g=1}^{G} \frac{c_g}{c_g+t}\biggr) \cdot \biggl(\prod_{g=1}^{G} \binom{c_g+t}{t}\biggr)$.
\end{lemma}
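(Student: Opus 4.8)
The plan is to count the incoming edges of the $G$-multiplicity-vectors Markov chain (for the case $k=n$) by bookkeeping, for each edge, which coordinate $x_{(g,i)}$ is incremented, exactly as in the proof of \autoref{lem:ver_uni_edge_uni}(2) but now summed over the $G$ groups. I would first fix a group index $g$ with $1\le g\le G$ and a copy-count $i$ with $1\le i\le t$, and count the states that can be the \emph{head} of an edge whose last step incremented $x_{(g,i)}$ (equivalently, the edge came from decrementing $x_{(g,i-1)}$ and incrementing $x_{(g,i)}$). Such a head state is any valid $G$-multiplicity-vectors state with the extra constraint $x_{(g,i)}\ge 1$. By the product structure established in \autoref{num-vetrtices-non-uniform}, the number of such states is $\binom{c_g+t-1}{t}\cdot\prod_{g'\ne g}\binom{c_{g'}+t}{t}$, since the $g$-th factor counts non-negative solutions of $x_{(g,0)}+\cdots+x_{(g,t)}=c_g$ with $x_{(g,i)}\ge 1$, i.e.\ $\binom{c_g+t-1}{t}$, and all other groups are unconstrained.

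Next I would sum over the $t$ admissible values of $i$ (for $1\le i\le t$) within a fixed group $g$: this multiplies the above count by $t$, giving $t\cdot\binom{c_g+t-1}{t}\cdot\prod_{g'\ne g}\binom{c_{g'}+t}{t}$ edges associated with group $g$. Then I would sum over $g=1,\dots,G$. To match the claimed closed form, I would factor out the full product $\prod_{g=1}^{G}\binom{c_g+t}{t}$ and use the elementary identity $\binom{c_g+t-1}{t} = \frac{c_g}{c_g+t}\binom{c_g+t}{t}$, so that the $g$-th summand becomes $t\cdot\frac{c_g}{c_g+t}\cdot\prod_{g'=1}^{G}\binom{c_{g'}+t}{t}$. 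Summing over $g$ yields exactly $t\cdot\bigl(\sum_{g=1}^{G}\frac{c_g}{c_g+t}\bigr)\cdot\bigl(\prod_{g=1}^{G}\binom{c_g+t}{t}\bigr)$, as required.

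The only subtlety — and the step I would be most careful about — is confirming that this classification of edges by the incremented coordinate $(g,i)$ is a genuine partition of the edge set: every edge increments exactly one coordinate $x_{(g,i)}$ with $1\le i\le t$ (the coordinate receiving the newly drawn copy), and conversely each choice of $(g,i)$ together with a head state satisfying $x_{(g,i)}\ge 1$ determines a unique edge (the unique ancestor obtained by moving one unit from $x_{(g,i)}$ back to $x_{(g,i-1)}$). This is immediate from part~(4) of the preceding claim, which shows an incoming transition changes only the coordinates $x_{(g,i-1)}$ and $x_{(g,i)}$ of a single group. Since $k=n$ here there are no terminal-state exclusions to worry about (unlike in \autoref{th:countssym}), so no edges are removed, and the count is exact.
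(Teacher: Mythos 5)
Your proposal is correct and follows essentially the same route as the paper's proof: both classify incoming edges by the incremented coordinate $x_{(g,i)}$, count head states with $x_{(g,i)}\geq 1$ as $\binom{c_g+t-1}{t}\cdot\prod_{g'\neq g}\binom{c_{g'}+t}{t}$, sum over $i$ and $g$, and finish with the identity $\binom{c_g+t-1}{t}=\frac{c_g}{c_g+t}\binom{c_g+t}{t}$. Your added care in verifying that this classification is a genuine partition of the edge set is a welcome clarification but not a departure from the paper's argument.
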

\begin{proof}
We count the incoming edges according to the value of $x_{(g,i)}$ (in the $G$ multiplicity vectors state) that is changed on every incoming edge. Assume an incoming edge changes the value of $x_{(i,g)}$, for $(1\leq g \leq G,1\leq i \leq t)$, in the multiplicity vector $v_g$, so this value increases by one. Hence, the number of options is the multiplication of the number of solutions over the non-negative integers to the equation $(x_{(g',0)}+x_{(g',1)}+\cdots + x_{(g',t)}) = c_{g'}$, for every $g'\neq g$ times the number of solutions to $(x_{(g,0)}+x_{(g,1)}+\cdots + x_{(g,t)}) = c_{g}$ where $x_{(g,i)}\geq 1$. Hence we get $\sum_{g=1}^{G} t \cdot \binom{c_g + t - 1}{t} \cdot \biggl(\prod_{g' = 1, g' \neq g}^{G} \binom{c_g' + t}{t}\biggr)$. Since \( \binom{c_g + t - 1}{t} = \binom{c_g + t}{t} \cdot \frac{c_g}{c_g + t} \) we obtain $t\cdot \biggl(\sum_{g=1}^{G} \frac{c_g}{c_g+t}\biggr) \cdot \biggl(\prod_{g=1}^{G} \binom{c_g+t}{t}\biggr)$ .

\end{proof}

One can verify that in the case of uniform distribution, $G=1$ and $c_1 = n$, the formulas are equal to $V_{UDA}$ and $E_{UDA}$, and in the case when there is no symmetry to exploit, $G=n$ and $c_g = 1$, the formulas are equal to $V_{BA}$ and $E_{BA}$.

The next lemma provides an upper bound for the number of states and edges in \textbf{DPSA} Markov chain.
\begin{lemma}\label{lem:upbound}
     Given $n,t\text{ and }\bfp$ it holds that
    \begin{enumerate}
        \item $V_{DPSA}^{\bfp}(n,n,t) \leq \binom{\frac{n}{G}+t}{t}^{G}$,
        \item $E_{DPSA}^{\bfp}(n,n,t) \leq t \cdot \frac{n}{\frac{n}{G} + t} \cdot \binom{\frac{n}{G}+t}{t}^{G}$,
    \end{enumerate}
    where $G$ is the number of unique values in the probability vector $\bfp$.
\end{lemma}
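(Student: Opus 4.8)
The plan is to derive both inequalities directly from the exact formulas already established in \autoref{num-vetrtices-non-uniform} and \autoref{num-edges-non-uniform}, namely $V_{DPSA}^{\bfp}(n,n,t)=\prod_{g=1}^{G}\binom{c_g+t}{t}$ and $E_{DPSA}^{\bfp}(n,n,t)=t\bigl(\sum_{g=1}^{G}\frac{c_g}{c_g+t}\bigr)\bigl(\prod_{g=1}^{G}\binom{c_g+t}{t}\bigr)$, together with the single constraint $\sum_{g=1}^{G}c_g=n$ and a convexity (Jensen) argument. Throughout I would read $\binom{x+t}{t}$ for a real $x\ge 0$ as the polynomial $\frac{(x+1)(x+2)\cdots(x+t)}{t!}$, which is the meaning forced by the statement since $n/G$ need not be an integer.

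For part~1, I would show that $\phi(c)\triangleq\log\binom{c+t}{t}=\sum_{j=1}^{t}\log(c+j)-\log t!$ is concave on $[0,\infty)$, which is immediate since each summand $\log(c+j)$ has second derivative $-1/(c+j)^2<0$. Applying Jensen's inequality to the values $c_1,\dots,c_G$ gives $\frac1G\sum_{g=1}^{G}\phi(c_g)\le\phi\!\left(\frac1G\sum_{g=1}^{G}c_g\right)=\phi(n/G)$, and exponentiating yields $\prod_{g=1}^{G}\binom{c_g+t}{t}\le\binom{n/G+t}{t}^{G}$, i.e. $V_{DPSA}^{\bfp}(n,n,t)\le\binom{n/G+t}{t}^{G}$.

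For part~2, I would additionally note that $\psi(c)\triangleq\frac{c}{c+t}=1-\frac{t}{c+t}$ is concave and nonnegative on $[0,\infty)$ (second derivative $-2t/(c+t)^3<0$), so again by Jensen $\sum_{g=1}^{G}\frac{c_g}{c_g+t}=G\cdot\frac1G\sum_{g=1}^{G}\psi(c_g)\le G\,\psi(n/G)=\frac{n}{n/G+t}$. Since both $\sum_{g}\frac{c_g}{c_g+t}$ and $\prod_{g}\binom{c_g+t}{t}$ are nonnegative, multiplying the bound from part~1 by this one and then by $t$ gives $E_{DPSA}^{\bfp}(n,n,t)\le t\cdot\frac{n}{n/G+t}\cdot\binom{n/G+t}{t}^{G}$, as claimed.

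I do not expect a serious obstacle; the only point requiring care is that the $c_g$ are positive integers while the bounds are phrased via the possibly non-integer average $n/G$, which is exactly why I prefer the real-analytic Jensen argument over a discrete smoothing step — one must make sure the chosen real extension of $\binom{\cdot}{t}$ is the product-of-linear-factors one, since that is what makes $\phi$ concave. For completeness I would remark that part~1 also admits a purely combinatorial proof by the smoothing inequality $\binom{a+t}{t}\binom{b+t}{t}\le\binom{a+1+t}{t}\binom{b-1+t}{t}$ whenever $a+1\le b$ (which reduces to $1+\frac{t}{a+1}\ge 1+\frac{t}{b}$), iterated until the parts are as equal as possible, but the Jensen proof is shorter and handles part~2 uniformly.
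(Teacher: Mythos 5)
Your proof is correct and takes essentially the same route as the paper's: both start from the exact formulas of \autoref{num-vetrtices-non-uniform} and \autoref{num-edges-non-uniform} and apply Jensen's inequality to the concave functions $\log\binom{x+t}{t}=\sum_{y=1}^{t}\log(x+y)-\log(t!)$ and $\frac{x}{x+t}$ under the constraint $\sum_{g}c_g=n$. Your added care about the real-valued extension of $\binom{x+t}{t}$ (and the optional combinatorial smoothing argument) goes slightly beyond what the paper writes, but the core argument is identical.
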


\begin{proof}
\begin{enumerate}
    \item Given $V_{DPSA}^{\bfp}(n,n,t) = \prod_{g=1}^{G} \binom{c_g+t}{t}$ from \autoref{num-vetrtices-non-uniform} it follows that,
 \begin{align*}
        \log \Biggl( \prod_{g=1}^{G} \binom{c_g+t}{t} \Biggr) &=
        \sum_{g=1}^{G} \log \Biggl(  \binom{c_g+t}{t} \Biggr) =
        \sum_{g=1}^{G} \log \Biggl(  \frac{(c_g + t)!}{t! \cdot c_g!}\Biggr) =\\
        &= \sum_{g=1}^{G} \biggl(\log \Biggl(  \frac{(c_g + t)!}{c_g!}\Biggr) - \log(t!)\biggr) = \Biggl( \sum_{g=1}^{G} \sum_{y=1}^{t} \log (c_g + y) \Biggr) - G \cdot \log(t!)
    \end{align*}
Note that $f(x)=\sum_{y=1}^{t} \log (x + y)$ is a concave function since its second derivative is negative,$ \biggl( \sum_{y=1}^{t} \log (x + y)\biggr )'' = \biggl( \sum_{y=1}^{t} \frac{1}{x+y} \biggr) ' = \sum_{y=1}^{t} \frac{-1}{(x+y)^2} < 0 $. Thus by applying Jensen inequality, it holds that,
$\sum_{g=1}^{G} \sum_{y=1}^{t} \log (c_g + y) \leq G \cdot \sum_{y=1}^{t} \log \Bigl( \frac{n}{G} + y \Bigr)$ since $\sum_{g=1}^{G} c_g = n$.
Therefore, it follows that, 
    \begin{align*}
        \log \Biggl( \prod_{g=1}^{G} \binom{c_g+t}{t} \Biggr) &= \Biggl( \sum_{g=1}^{G} \sum_{y=1}^{t} \log (c_g + y) \Biggr) - G \cdot \log(t!) \leq G \cdot \Biggl( \sum_{y=1}^{t} \log \Bigl( \frac{n}{G} + y \Bigr) - \log(t!) \Biggr) =\\
        &= G \cdot \Biggl( \log \biggl( \frac{(\frac{n}{G} + t)!}{(\frac{n}{G})!} \biggr) - \log(t!) \Biggr) = G \cdot \log \Biggl( \binom{\frac{n}{G} + t}{t} \Biggr) = \log \Biggl( \binom{\frac{n}{G} + t}{t}^{G} \Biggr)
    \end{align*}
Since, $\log(\cdot)$ is monotonous ascending, we get $V_{DPSA}^{\bfp}(n,n,t) = \prod_{g=1}^{G} \binom{c_g+t}{t} \leq \binom{\frac{n}{G} + t}{t}^{G}$.
\item Given $E_{DPSA}^{\bfp}(n,n,t) = t \cdot\biggl(\sum_{g=1}^{G} \frac{c_g}{c_g+t}\biggr) \cdot \biggl(\prod_{g=1}^{G} \binom{c_g+t}{t}\biggr)$ from \autoref{num-edges-non-uniform} by applying Jensen inequality it follows that, $\sum_{g=1}^{G} \frac{c_g}{c_g+t} \leq G \cdot \frac{\frac{n}{G}}{\frac{n}{G} + t} = \frac{n}{\frac{n}{G} + t}$ since for any $x > 0$ it hold that $f(x)=\frac{x}{x+t}$ is a concave function $ \biggl( \frac{x}{x+t} \biggr) '' = \biggl( \frac{t}{(x+t)^2} \biggr) ' = \frac{-2t \cdot (x+t)}{(x+t)^4} < 0$. Therefore,
    $$ E_{DPSA}^{\bfp}(n,n,t) = t \cdot\biggl(\sum_{g=1}^{G} \frac{c_g}{c_g+t}\biggr) \cdot  \biggl(\prod_{g=1}^{G} \binom{c_g+t}{t}\biggr) \leq t \cdot\frac{n}{\frac{n}{G} + t} \cdot  \binom{\frac{n}{G} + t}{t}^{G} $$
\end{enumerate}    
\end{proof}

\begin{remark}
    In the case all the sets are the same size ($\frac{n}{G}$), the upper bounds provided in \autoref{lem:upbound} are tight.
\end{remark}
\section{Comparison and Evaluation}\label{sec:compa}
In this section, we will both compare the time complexity of the \textbf{DPSA} for some drawing probability distributions and evaluate the performance of \textbf{UDA} and \textbf{DPSA}.
Let us define the following distributions:
\begin{enumerate}
    \item
    The "$c$ sets" distribution, where $G=c$ and $c_g = \frac{n}{c}$.
    \item
    The "$c$-size set" distribution, where $G = \frac{n}{c}$ and $c_g = c$.
    \item
    The "almost uniform" distribution, where $G=2$ and $c_1 = n-1, c_2=1$.
    \item 
    The "$0$ symmetry" distribution, where $G=n$, $c_g = 1$.
    \item
    The uniform distribution, where $G=1$, $c_1 = n$.
\end{enumerate}

Table~\ref{table:1} presents a comparison of the different distributions discussed above in terms of time complexity. The vertices and edges are computed using the \textbf{DPSA}, which leverages symmetry in the probability vector. For example, in the case of the almost uniform distribution, the \textbf{DPSA} achieves a complexity of $O(n^t)$, whereas without the \textbf{DPSA} approach, the complexity is exponential. Note that the values $q_1, \dots, q_G$ are not relevant to the computational complexity analysis.

\begin{table}[h!]
    \centering
    \renewcommand{\arraystretch}{1.5} 
    \setlength{\tabcolsep}{12pt} 
    
    \Large 
    \caption{Comparison of the number of vertices and edges of the \textbf{DPSA} Markov chain for different distributions which indicates the algorithm time complexity. }
    \label{table:1}
    \begin{tabular}{| c | c | c |}
        \hline
        
        Distribution & $V_{DPSA}(n,n,t)$ & $E_{DPSA}(n,n,t)$ \\ \hline
        
        $0$ symmetry & $(t+1)^n$ & $t \cdot n \cdot (t+1)^{n-1}$ \\ \hline
        
        $c$-size groups & $\binom{c+t}{t}^{\frac{n}{c}}$ & $t \cdot \frac{n}{c + t} \cdot \binom{c+t}{t}^{\frac{n}{c}}$ \\ \hline
        
        $c$-groups & $\binom{\frac{n}{c}+t}{t}^{c}$ & $t \cdot \frac{n}{\frac{n}{c} + t} \cdot \binom{\frac{n}{c}+t}{t}^{c}$ \\ \hline
        
        almost uniform & $(t+1) \cdot \binom{n-1+t}{t}$ & $t \cdot \left(\frac{1}{t+1} + \frac{n-1}{n-1+t}\right) \cdot \left((t+1) \cdot \binom{n-1+t}{t}\right)$ \\ \hline
        
        uniform & $\binom{n+t}{t}$ & $t \cdot \binom{n+t-1}{t}$ \\ \hline
        
    \end{tabular}
\end{table}

Next, we quantitatively evaluate the proposed \textbf{UDA} and \textbf{DPSA} by comparing their analytical results with Monte Carlo simulations\footnote{Monte Carlo simulations estimate uncertain outcomes by repeatedly sampling random variables from a probabilistic model.}. Figure~\ref{fig:preformence} demonstrates the computational efficiency of \textbf{UDA}, showing that exact results that previously considered computationally infeasible are obtained in under five seconds for $k=n$ and $t=3$. Figure~\ref{fig:preformence2} presents analogous results for \textbf{DPSA}, where exact computations are achieved in under five seconds for $k=n$ and $t=2$, considering two coupon types of equal size $0.5n$ with total group probabilities $q_1 = 0.3 \cdot 0.5n$ and $q_2 = 0.7 \cdot 0.5n$.

For both methods, we compare runtime and accuracy against Monte Carlo simulations with $10^3$, $10^4$, and $10^5$ iterations. In all cases, \textbf{UDA} and \textbf{DPSA} consistently outperform the simulations in runtime. While simulations with $10^3$ iterations achieve comparable execution times, they incur the largest error rates, highlighting the tradeoff between accuracy and efficiency inherent to simulation-based approaches and underscoring the advantages of the proposed analytical methods.
\begin{figure}[ht]
    \centering
    \includegraphics[width=\linewidth]{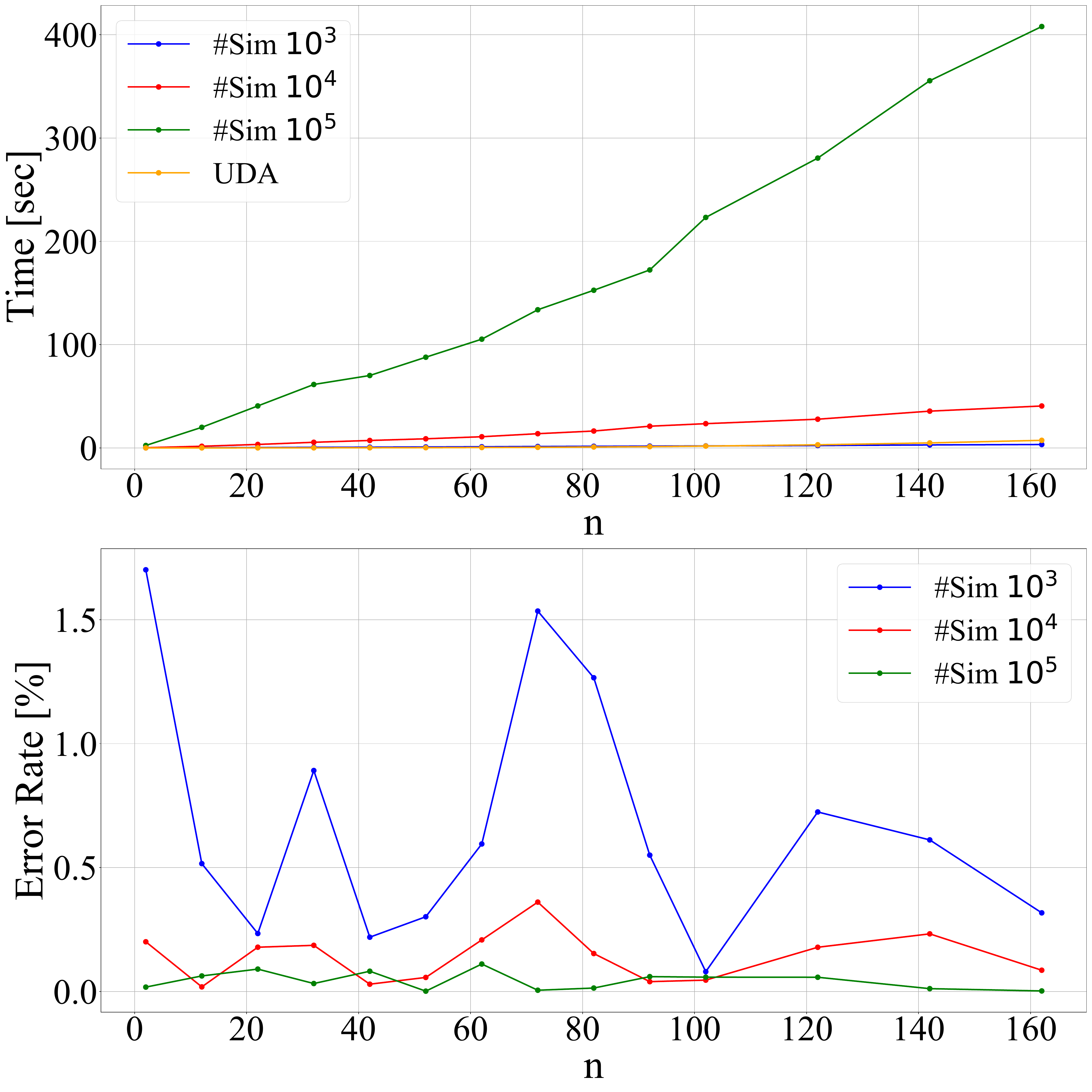}
    \caption{Performance evaluation of Computation Time and Error Rate between the \textbf{UDA} and Monte Carlo simulations. Computation time is shown for increasing values of \( n \), with $k = n$ and for $t=3$.}
    \label{fig:preformence}
\end{figure}
\begin{figure}[ht]
    \centering
    \includegraphics[width=\linewidth]{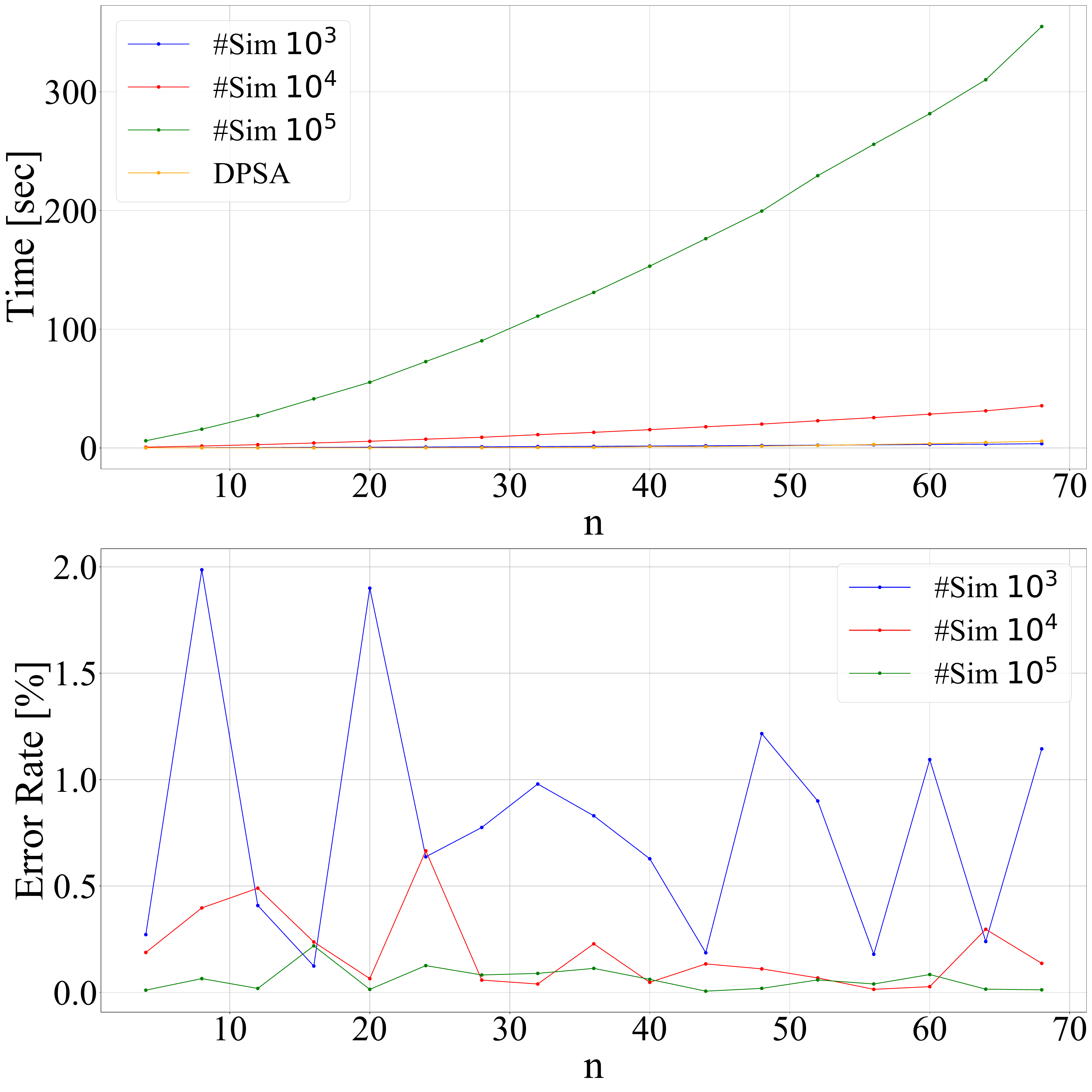}
    \caption{Performance evaluation of Computation Time and Error Rate between the \textbf{DPSA} and Monte Carlo simulations. Computation time is shown for increasing values of \( n \), with $k = n$ , $t=2$ and $g=2$ each group is of size $0.5n$ and $q_1=0.3\cdot 0.5n,q_2=0.7\cdot0.5n$.}
    \label{fig:preformence2}
\end{figure}


\section{Conclusion And Future Work}\label{sec:conc}
In this work, we present a comprehensive modeling of the general coupon collector problem, enabling explicit calculations of the expectation, variance, and second moment of the collection process. We introduce two algorithms, \textbf{UDA} and \textbf{DPSA}, designed to compute these values for both the uniform distribution and any arbitrary distribution, respectively. These algorithms provide polynomial-time exact computations for the uniform distribution. Furthermore, we extend this capability to compute exact values for any probability distribution and generalize the uniform approach to reduce the computational complexity from exponential to polynomial for probability vectors exhibiting symmetry. 
Future work will focus on deriving closed-form expressions for key quantities, reducing reliance on iterative Markov chain computations.

\bibliographystyle{ieeetr}
\bibliography{bib}
\newpage

\end{document}